\documentclass[pra,twocolumn,superscriptaddress,showpacs]{revtex4-1}

\usepackage{amsmath}
\usepackage{latexsym}
\usepackage{amssymb}
\usepackage{graphicx}
\usepackage[colorlinks=true, citecolor=blue, urlcolor=blue]{hyperref}
\usepackage{float}
\usepackage{amsfonts}
\usepackage{textcomp}
\usepackage{mathpazo}
\usepackage{comment}



\usepackage{bbm}

\usepackage{xcolor}
\definecolor{myurlcolor}{rgb}{0,0,0.4}
\definecolor{mycitecolor}{rgb}{0,0.5,0}
\definecolor{myrefcolor}{rgb}{0.5,0,0}
\usepackage{hyperref}
\hypersetup{colorlinks,
linkcolor=myrefcolor,
citecolor=mycitecolor,
urlcolor=myurlcolor}

\sloppy

\usepackage[draft]{fixme}
\usepackage{amsmath,bbm}
\usepackage{graphicx}
\usepackage{amsfonts}
\usepackage{amssymb}
\usepackage{amsmath, amssymb, amsthm,verbatim,graphicx,bbm}
\usepackage{mathrsfs}
\usepackage{color,xcolor,longtable}


\newcommand{\beq}[0]{\begin{equation}}
\newcommand{\eeq}[0]{\end{equation}}

\newcommand{\one}{\leavevmode\hbox{\small1\normalsize\kern-.33em1}}

\font\Bbb =msbm10 
 \def\C{{\hbox {\Bbb C}}}

\def\be{\begin{equation}}
\def\ee{\end{equation}}
\def\ben{\begin{eqnarray}}
\def\een{\end{eqnarray}}
\def\eea{\end{array}}
\def\bea{\begin{array}}
\newcommand{\ot}[0]{\otimes}
\newcommand{\Tr}[1]{\mathrm{Tr}#1}
\newcommand{\bei}{\begin{itemize}}
\newcommand{\eei}{\end{itemize}}
\newcommand{\ket}[1]{|#1\rangle}
\newcommand{\bra}[1]{\langle#1|}

\newcommand{\proj}[1]{\ket{#1}\!\bra{#1}}

\newcommand{\N}{\mathbb{N}}
\newcommand{\Q}{\mathbb{Q}}
\newcommand{\R}{\mathbb{R}}
\newcommand{\I}{\mathbbm{1}}
\newcommand{\w}{\omega}
\newcommand{\p}{\Vec{p}}

\renewcommand{\emph}[1]{\textbf{#1}}


\makeatletter
\newtheorem*{rep@theorem}{\rep@title}
\newcommand{\newreptheorem}[2]{%
\newenvironment{rep#1}[1]{%
 \def\rep@title{#2 \ref{##1}}%
 \begin{rep@theorem}}%
 {\end{rep@theorem}}}
\makeatother

\theoremstyle{plain}
\newtheorem*{thm*}{Theorem}
\newreptheorem{thm}{Theorem}
\newtheorem{lem}{Lemma}
\newtheorem{fakt}{Fact}
\newtheorem{obs}{Observation}[lem]

\theoremstyle{definition}

\theoremstyle{remark}


\usepackage[T1]{fontenc}


\begin{document}
\title{Self-testing quantum systems of arbitrary local dimension\\ with minimal number of measurements}
\author{Shubhayan Sarkar}
\affiliation{Center for Theoretical Physics, Polish Academy of Sciences, Aleja Lotnikow 32/46, 02-668 Warsaw, Poland }
\author{Debashis Saha}
\affiliation{Center for Theoretical Physics, Polish Academy of Sciences, Aleja Lotnikow 32/46, 02-668 Warsaw, Poland }
\author{J\k{e}drzej Kaniewski}
\affiliation{Faculty of Physics, University of Warsaw, Pasteura 5, 02-093 Warsaw, Poland}
\author{Remigiusz Augusiak}
\affiliation{Center for Theoretical Physics, Polish Academy of Sciences, Aleja Lotnikow 32/46, 02-668 Warsaw, Poland }

\begin{abstract}
Bell nonlocality as a resource for device independent certification schemes has been studied extensively in recent years. The strongest form of device independent certification is referred to as self-testing, which given a device certifies the promised quantum state as well as quantum measurements performed on it without any knowledge of the internal workings of the device. 
In spite of various results on self-testing protocols, it remains a highly nontrivial problem to propose a certification scheme of qudit-qudit entangled states based on violation of a single $d$-outcome Bell inequality. Here we address this problem and propose a self-testing protocol for the maximally entangled state of any local dimension using the minimum number of measurements possible, i.e.,  two per subsystem.
Our self-testing result can be used to establish unbounded randomness expansion, 
$\log_2d$ perfect random bits, while it requires only one random bit to encode the measurement choice.
\end{abstract}

\maketitle

\textit{Introduction.---}
The advent of quantum theory has not just changed the understanding of physics but has also given rise to new phenomena that would have never been possible  in the classical world. Arguably one of the most interesting features of quantum theory is the existence of quantum correlations which cannot be explained by any local hidden variable model, a phenomenon commonly referred to as Bell nonlocality \cite{Bell,Bell66}. It has been understood that apart from its fundamental significance, nonlocality is a powerful resource for certain device-independent applications such as quantum cryptography \cite{di2}, randomness generation \cite{di3,di4}, or, more recently, for  device-independent certification methods \cite{di1,di2,di3,di4}.

The strength of the device-independent (DI) certification methods lies in the fact that given an unknown quantum system, one can make nontrivial statements on some of its key features based solely on the nonlocal correlations obtained from it.
They are advantageous over standard certification methods such as those based on quantum tomography (cf. Ref. \cite{NielsenChuang}) because they do not require making assumptions on the system under study, apart from that it is governed by quantum mechanics. An example of such a DI certification scheme would be verifying whether a quantum device produces entanglement \cite{DIEW} or certification of the dimension of a quantum system \cite{Dimension}, both based on a violation of some Bell inequality by the corresponding quantum system.

The strongest form of DI certification is self-testing. First introduced in \cite{di1}, it allows one to provide a full description, up to certain well-understood equivalences, of the considered quantum system and also the measurements performed on it based on observing the maximal violation of some Bell inequality. Such a form of certification is particularly interesting from the application point of view as it provides a way of verifying that a given quantum device functions properly without the need of knowing its internal working. 

A lot of attention has thus been devoted to proposing self-testing schemes for entangled quantum states  \cite{Scarani,Yang,Bamps,All,chainedBell}. However, most of the obtained results focus on states that are locally qubits such as for instance the self-testing statement for any two-qubit entangled state \cite{Yang,Bamps} based on a violation of the tilted version of the famous Clauser-Horne-Shimony-Holt (CHSH) Bell inequality \cite{CHSH,Acin}. At the same time, quantum systems of higher local dimension remain mostly unexplored and, in fact, few results are devoted to them \cite{Yang,Coladangelo,coladangelo18,Jed}. In particular, in Refs. \cite{Coladangelo} the two-qubit results \cite{Yang,Bamps} were combined to design a self-testing protocol for any entangled state of arbitrary local dimension. Still, these results are based on a violation of many two-outcome Bell inequalities and the question whether one can design a self-testing statement for qudit quantum systems relying on violation of a single and truly $d$-outcome Bell inequality remains open. Moreover, the self-testing statement of Refs. \cite{Yang,Coladangelo} is not optimal in terms of the number of measurements that the observers need to perform (three and four, respectively) to certify the state and the corresponding measurements. Taking into account the possibility of experimental implementations of self-testing, it is a highly relevant question whether alternative protocols can be derived which rely on the minimal number of two measurements per observer.

The main aim of this work is to address the above questions. We provide the first self-testing statement for a maximally entangled state of local dimension $d$:
\begin{equation}\label{phi+}
    \ket{\phi_d^+}=\frac{1}{\sqrt{d}}\sum_{i=0}^{d-1}\ket{ii}
\end{equation}
from a violation of a truly $d$-outcome Bell inequality exploiting the minimal possible number of measurements per party, which is two. To this end, we use the Bell inequality introduced in Ref. \cite{SATWAP} as a generalization of the well-known CHSH Bell inequality to $d$-outcome Bell scenarios. A straightforward implication of our self-testing statement is a novel and simpler, as compared to the previous approaches \cite{wu16,mckague16}, scheme for parallel self-testing of $N$ copies of the two-qubit maximally entangled state $\ket{\phi_2^+}$.
%
%
Another implication is that the outcomes of local measurements maximally violating the respective Bell inequality are perfectly random, which allows us to propose a quantum protocol for unbounded expansion of quantum randomness.

\textit{Preliminaries.---}Before presenting our results let us 
set up the scenario and introduce the relevant notions.

\textit{Bell scenario.} The simplest experimental set-up exhibiting quantum nonlocality, namely, the bipartite Bell set-up, comprises one preparation and two measurement devices. The latter are possessed by distant and noncommunicating parties, Alice and Bob, and can in general perform one of $m$ measurements, denoted $M_x$ and $N_y$ with $x,y=1,\ldots,m$. In each run of the experiment, a bipartite quantum system $\rho_{AB}$ is prepared by the preparation device and subsequently each measurement device performs a measurement on one of its subsystems, yielding an outcome $a$ and $b$, respectively, with $a,b\in\{0,\ldots,d-1\}$. 

Correlations obtained in this experiment are captured by a vector of probability distributions 
\begin{equation}
    \p = \{p(a,b|x,y)\}\in \R^{(md)^2}.
\end{equation}
Here, $p(a,b|x,y)$ is the probability of obtaining outcomes $a$
and $b$ by Alice and Bob after performing measurements $M_x$ and $N_y$, respectively, and it is given by the well-known formula
\begin{equation}
    p(a,b|x,y)=\Tr\left[\rho_{AB}(M_x^{(a)}\otimes N_y^{(b)})\right],
\end{equation}
where $M_x^{(a)}$ and $N_y^{(b)}$ are the measurement operators
defining the measurements $M_x$ and $N_y$, respectively. In what follows 
we also refer to the vector $\vec{p}$ as correlations.

In a fixed Bell scenario with $m$ $d$-outcome measurements such quantum correlations, that is, those obtained by performing local measurements on composite quantum systems, form a convex set $\mathcal{Q}$. For further purposes it is important to recall that any point $\p \in\mathcal{Q}$ can be achieved with a pure state $\ket{\psi_{AB}}$, whose local dimensions might in general be higher than those of $\rho_{AB}$, and projective measurements $M_x=\{P_x^{(a)}\}$ and $N_y=\{Q_y^{(b)}\}$. 

It turns out that the quantum set $\mathcal{Q}$ contains correlations that even if obtained from a quantum state, can be simulated by Alice and Bob in a purely classical way. Such correlations are said to admit a local-realistic description and for brevity we refer to them as local or classical. More formally, local correlations are those for which $p(a,b|x,y)$ can be represented as a convex combination of product deterministic correlations
\begin{equation}
p(a,b|x,y)=\sum_{\lambda}\mu(\lambda) p(a|x,\lambda)p(b|y,\lambda),
\end{equation}
where $\lambda$ is a random variable distributed according to a distribution $\mu(\lambda)$ and $p(a|x,\lambda),p(b|y,\lambda)\in\{0,1\}$ for every $x,y,a,b$. For any $m$ and $d$ such local correlations form a convex polytope that we denote $\mathcal{L}$.

\textit{Bell inequalities.} As proven by Bell, the local set $\mathcal{L}$ is a proper subset of $\mathcal{Q}$ \cite{Bell} and those quantum points that are outside $\mathcal{L}$ are termed nonlocal. The most natural way to show that a given point $\p\in\mathcal{Q}$ is not an element of $\mathcal{L}$, is to use Bell inequalities. Recall their generic form to be 
\begin{eqnarray}\label{BI}
\mathcal{I}: = \Vec{t}\cdot\p &\leq& \beta_L, 
\end{eqnarray}
where $\vec{t}=\{t_{abxy}\}$ is a collection of some real coefficients $t_{abxy}$ and $\beta_L=\max_{\p\in\mathcal{L}} \mathcal{I}$ is the local bound of the inequality (\ref{BI}). Analogously, the quantum or the Tsirelson bound of (\ref{BI}) is defined as $\beta_Q=\sup_{\p\in\mathcal{Q}}\mathcal{I}$.

Clearly, violation of a Bell inequality by correlations $\vec{p}$ implies that it is nonlocal. Moreover, any point $\vec{p}$ violating maximally some Bell inequality (or, equivalently, achieving its Tsirelson bound) necessarily belongs to the boundary of the quantum set $\mathcal{Q}$.

\textit{Correlation picture.} Let us finally notice that it is often beneficial to describe correlations obtained in the Bell experiment by expectation values instead of probability distributions. A convenient way to do so in Bell scenarios involving $d$-outcome measurements
is to consider the two-dimensional Fourier transform of the conditional probabilities $p(a,b|x,y)$:
\begin{equation}\label{ExpValues}
    \langle A^{(k)}_x B^{(l)}_y \rangle = \sum^{d-1}_{a,b=0} \w^{ak+bl} p(a,b|x,y),
\end{equation}
where $\w$ is the $d$-th root of unity $\omega=\exp(2\pi\mathbbm{i}/d)$ and $k,l=0,\ldots,d-1$. Importantly, 
for any quantum point $\vec{p}$, the expectation values (\ref{ExpValues}) can be represented as $\langle A^k_x B^l_y \rangle = \langle\psi_{AB}|A_x^{(k)}\otimes B_y^{(l)}|\psi_{AB}\rangle$ with $\{A_x^{(k)}\}$ and $\{B_y^{(l)}\}$ being collections of unitary operators with eigenvalues $\omega^i$ $(i=0,\ldots,d-1)$ defined as the Fourier transforms of the corresponding projective measurements
\begin{equation}
    A^{(k)}_x = \sum^{d-1}_{a=0} \w^{ak} P^{(a)}_x, \qquad B^{(l)}_y = \sum^{d-1}_{b=0} \w^{bl} Q^{(b)}_y.
\end{equation}
It is not difficult to see that $A_x^{(k)}$ is simply the $k$th power of $A_x$
(and similarly for Bob's operators) and thus in what follows we simply write $A_x^k$. In this correlator picture the unitary operators $A_x$ and $B_y$ are $d$-outcome observables measured in the Bell set-up.

\textit{Self-testing.---}We are now ready to present our main result: the self-testing statement for the two-qudit maximally entangled state and certain $d$-outcome measurements usually referred to as Collins-Gisin-Linden-Massar-Popescu (CGLMP) measurements \cite{Zukowski,CGLMP,BKP} (see Appendix \ref{sec:cglmp} for their explicit form).

To recall the task of self-testing, or more generally, DI certification, let us consider a quantum device performing a Bell experiment on some state $\ket{\psi_{AB}}\in\mathcal{H}_A\otimes\mathcal{H}_B$ with some quantum $d$-outcome observables $A_x$ and $B_y$, where the dimensions of the $\mathcal{H}_A$ and $\mathcal{H}_B$ are unknown but finite. The only information accessible to the user about how this device functions are the observed correlations $\vec{p}$. The aim of DI certification is to reveal the form of the state and observables from violation of some Bell inequality by $\vec{p}$ (given that the observed correlations violate some Bell inequality).
However, there are two degrees of freedom which can never be detected from the observed statistics. One is the set of local unitaries $U_A,U_B$ that act on $\mathcal{H}_A,\mathcal{H}_B$, that is, the state $U_A\otimes U_B|\psi_{AB}\rangle$ together with $\{U_AA_xU^\dagger_A\},\{U_BB_yU^\dagger_B\}$ will generate the same $\p$. Another one is the presence of an auxiliary system on which the measurements act trivially. A particular instance of such DI certification, termed self-testing, infers a unique state and two sets of unique measurements up to these equivalences. 

Clearly, a necessary condition to derive an exact self-testing statement
for a state and measurements is that the obtained correlations $\p$ lie on the boundary of $\mathcal{Q}$, and thus violate some Bell inequality maximally. Consequently, in order to derive a self-testing statement for $\ket{\phi_d^+}$ for any $d$ the first task is to identify a class of Bell inequalities in the bipartite scenario with the minimum number of measurements for which $\beta_Q$ is achieved by it for any $d$. The only known Bell inequality meeting these requirements is the Salavrakos-Augusiak-Tura-Wittek-Ac\'in-Pironio (SATWAP) Bell inequality \cite{SATWAP}, which in the correlator picture reads
 \begin{eqnarray}\label{eqSATWAP}
\mathcal{I}_{d}:&=&\sum_{k=1}^{d-1} \Big( a_k \langle A_1^{k}B_1^{d-k}\rangle +a_k^*\omega^k \langle A_1^{k}B_2^{d-k} \rangle \nonumber\\
&& \;\;\qquad + a_k^* \langle A_2^{k} B_1^{d-k} \rangle +  a_k \langle A_2^{k} B_2^{d-k} \rangle \Big) \leq \beta_L,
\end{eqnarray}
where $a_k=(1+\mathbbm{i})\omega^{k/4}/2$ and the classical value is given by 
$\beta_L=(1/2)[3\cot(\pi/4d)-\cot(3\pi/4d)]-2$. 

As proven in Ref.~\cite{SATWAP}, the maximal quantum value of this inequality is $\beta_Q=2(d-1)$ and it is achieved by the maximally entangled state of two qudits $\ket{\phi_d^+}$ and the aforementioned optimal CGLMP measurements 
(cf. Appendix \ref{sec:cglmp}). In what follows we show that this is in fact the only quantum system, up to additional of freedom and local unitary operations, realizing the maximal quantum violation of the inequality (\ref{eqSATWAP}). 

Let us introduce $Z_d=\mathrm{diag}[1,\omega,\ldots,\omega^{d-1}]$ to be the
$d$-dimensional generalization of the $\sigma_z$-Pauli matrix in the standard basis and $T_d$ to be the following matrix
\begin{equation}\label{ZdTd1}
T_d=\sum_{i=0}^{d-1}\omega^{i+\frac{1}{2}}\proj{i}-\frac{2}{d}\sum_{i,j=0}^{d-1}(-1)^{\delta_{i,0}+\delta_{j,0}}\omega^{\frac{i+j+1}{2}}|i\rangle\!\langle j|,
\end{equation}
where $\delta_{i,j}$ is the Kronecker delta: $\delta_{i,j}=1$ for $i=j$
and $\delta_{i,j}=0$ otherwise. It is not difficult to see that both matrices are unitary and have eigenvalues $\omega^i$ $(i=0,\ldots,d-1)$ (see Appendix B)
and thus represent $d$-outcome  quantum observables.\\
\begin{figure}[h]
\centering
\includegraphics[width=0.5\textwidth]{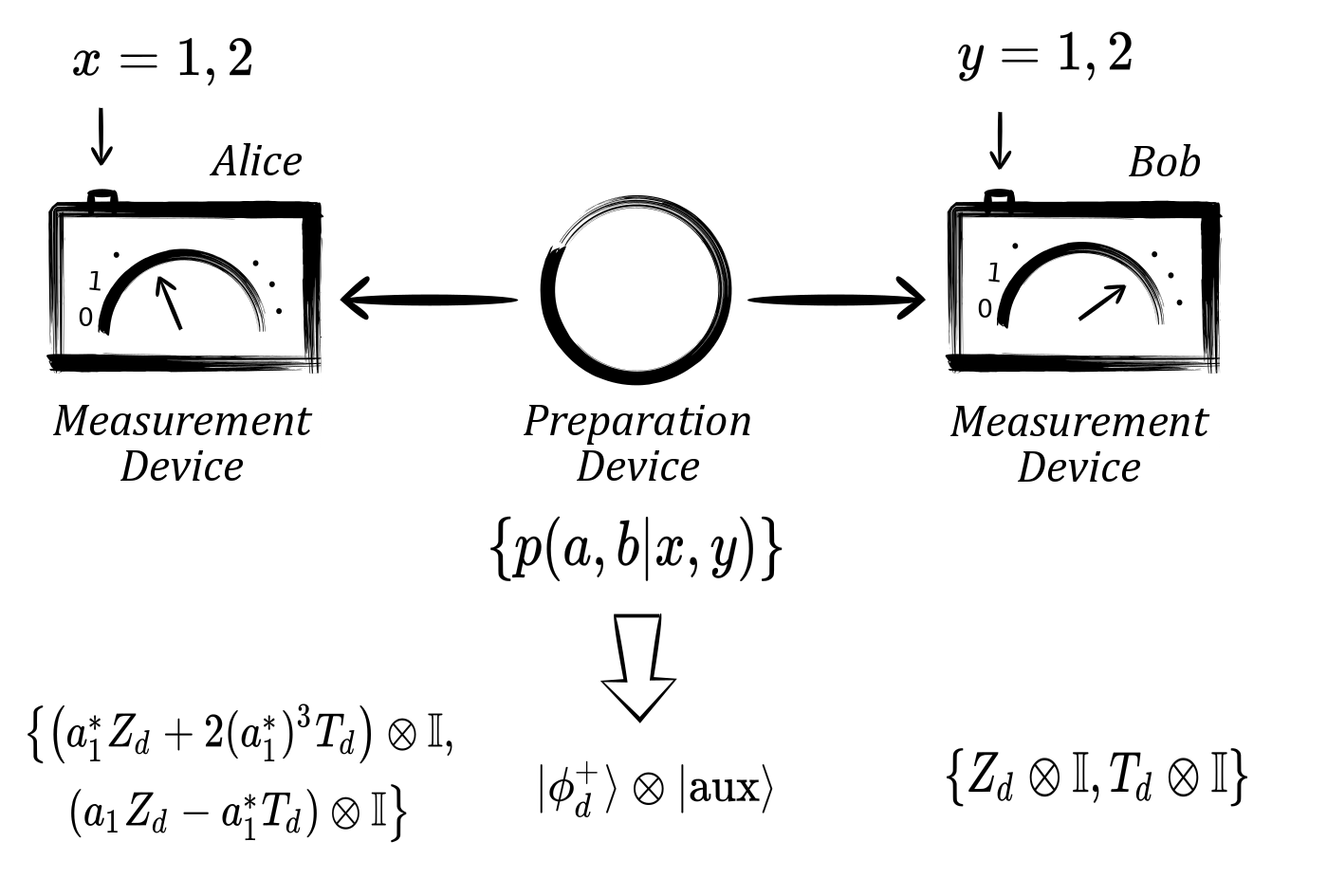}
\caption{\textbf{The self-testing scenario.} Alice and Bob receive an unknown quantum system from the preparation device. They have two input buttons each denoted by $1,2$ specifying two different measurements which they choose at random. The measurements point at one, out of the $d$ outcomes denoted by $\{0,1,\ldots,d-1\}$. Alice and Bob repeat the experiment to collect statistics. They compare their statistics and find the joint probability distribution for different outcomes and measurements denoted by $p(a,b|x,y)$. From the statistics we self-test the maximally entangled state \eqref{phi+} and two pairs of $d$-outcome measurements \eqref{AiBimain}-\eqref{AiBimain2}.}
\label{fig}
\end{figure}

Now, we can state our main theorem.
\begin{thm*}
Assume that the SATWAP Bell inequality (\ref{eqSATWAP}) is maximally violated by a state $\ket{\psi_{AB}}\in\mathcal{H}_A\otimes\mathcal{H}_B$ and unitary observables $A_i,B_i \ (i\in \{1,2\})$. Then, for any $d$, $\mathcal{H}_A=\C^d \otimes \mathcal{H}_{A'}$ as well as $\mathcal{H}_B=\C^d \otimes \mathcal{H}_{B'}$ with some auxiliary
Hilbert spaces $\mathcal{H}_{A'}$ and $\mathcal{H}_{B'}$ of unknown but finite dimensions. Moreover, there exist local unitary transformations
%
 $   U_A:\C^d \otimes \mathcal{H}_{A'} \rightarrow \C^d \otimes \mathcal{H}_{A'}$
 and $   U_B:\C^d \otimes \mathcal{H}_{B'} \rightarrow \C^d \otimes \mathcal{H}_{B'}$,
%
such that 
\begin{equation}\label{AiBimain}
U_B B_1 U^{\dagger}_B = Z_d \otimes \I_{B'}, \qquad U_B B_2 U^{\dagger}_B = T_d \otimes \I_{B'},    
\end{equation} 
and
\ben\label{AiBimain2}
U_A A_1 U^{\dagger}_A &=&   \left( a_1^* Z_d + 2(a^*_1)^3 T_d \right) \otimes \I_{A'}, \nonumber \\
U_A A_2 U^{\dagger}_A &=&   \left( a_1 Z_d - a^*_1 T_d \right) \otimes \I_{A'},
\een 
where $a_1 = (1+\mathbbm{i})\w^{1/4}/2$, $Z_d$ and $T_d$ are defined above,
and $\I_{A'},\I_{B'}$ are identity matrices acting on $\mathcal{H}_{A'}$ and $\mathcal{H}_{B'}$. Finally, the state $\ket{\psi_{AB}}$ is equivalent to 
$\ket{\phi_d^+}$ in the following sense
\be\label{state}
U_A\otimes U_B \ket{\psi_{AB}} = \ket{\phi^+_d} \otimes \ket{\mathrm{aux}_{A'B'}}, \ee 
where $\ket{\mathrm{aux}_{A'B'}}$ is some auxiliary state from 
$ \mathcal{H}_{A'}\otimes \mathcal{H}_{B'}$.
\end{thm*}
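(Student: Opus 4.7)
The plan is to exploit the sum-of-squares (SOS) decomposition of the shifted Bell operator $\beta_Q\,\I - \mathcal{B}_d$, where $\mathcal{B}_d$ denotes the operator associated with the SATWAP functional $\mathcal{I}_d$ in (\ref{eqSATWAP}). Such a decomposition, already used in Ref.~\cite{SATWAP} to certify the Tsirelson bound $\beta_Q=2(d-1)$, writes this operator as a sum $\sum_\alpha P_\alpha^\dagger P_\alpha$ whose generators $P_\alpha$ are polynomials in the observables $A_x,B_y$. Maximal violation by $|\psi_{AB}\rangle$ forces $P_\alpha|\psi_{AB}\rangle=0$ for every $\alpha$, producing a collection of algebraic identities that link Alice's observables to Bob's through the shared state.

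From these identities I would first derive \emph{reduction} relations of the form
\[
A_x^k\otimes\I_B\,|\psi_{AB}\rangle \;=\; \I_A\otimes F_{x,k}(B_1,B_2)\,|\psi_{AB}\rangle,
\]
for explicit polynomials $F_{x,k}$ dictated by the SOS certificate, and then specialise them to $k=1$. These relations fix $A_1,A_2$ on the support of the reduced state $\rho_A=\mathrm{Tr}_B|\psi_{AB}\rangle\!\langle\psi_{AB}|$ as the specific linear combinations of $Z_d$ and $T_d$ appearing in (\ref{AiBimain2}), \emph{provided} the structure of $B_1,B_2$ has already been identified. The real work is therefore shifted to Bob's side: one must pin down $B_1$ and $B_2$ up to the equivalence stated in (\ref{AiBimain}).

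The heart of the argument is to show that, combined with the spectral constraint (unitary with eigenvalues $\omega^i$), the reduction relations force $B_1$ and $B_2$ to obey exactly the polynomial identities satisfied by the model operators $Z_d$ and $T_d$ in (\ref{ZdTd1}). If the resulting $\ast$-algebra generated on $\mathrm{supp}(\rho_B)$ coincides with $M_d(\C)$, then a standard application of uniqueness of the irreducible representation yields a decomposition $\mathcal{H}_B\cong\C^d\otimes\mathcal{H}_{B'}$ in which $B_1,B_2$ act as $Z_d\otimes\I_{B'}$ and $T_d\otimes\I_{B'}$, defining the unitary $U_B$; a symmetric construction (driven by the same reduction relations) produces $U_A$. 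Plugging the canonical form of $B_1,B_2$ back into any of the reduction relations together with a Schmidt decomposition of $(U_A\otimes U_B)|\psi_{AB}\rangle$ in the eigenbasis of $Z_d$ then forces the Schmidt coefficients to be uniform, delivering the maximally entangled state (\ref{phi+}) tensored with an auxiliary vector $|\mathrm{aux}_{A'B'}\rangle$ as in (\ref{state}).

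The main obstacle is the algebraic step just described: extracting from the SOS certificate \emph{enough} independent polynomial identities to characterise the joint algebra of $B_1$ and $B_2$. For $d=2$ this collapses to the familiar anticommutation $\{B_1,B_2\}=0$ underlying the CHSH self-test and the conclusion is immediate; for general $d$ one must instead build relations that reflect the more elaborate structure of $T_d$ given in (\ref{ZdTd1}), and argue in particular that $Z_d$ and $T_d$ together generate the full matrix algebra $M_d(\C)$. This is precisely where the specific algebraic features of the SATWAP inequality and of the CGLMP observables have to be exploited, and it is the step I expect to require the most delicate case analysis in order to make the argument dimension-independent.
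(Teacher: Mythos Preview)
Your high-level strategy coincides with the paper's: the SOS decomposition yields relations $A_i^k\otimes C_i^{(k)}|\psi\rangle=|\psi\rangle$ with $C_i^{(k)}=a_k B_1^{-k}+a_k^*\omega^k B_2^{-k}$ (and the analogue with $i=2$), which on $\mathrm{supp}(\rho_B)$ become the operator identities $C_i^{(k)}=[C_i^{(1)}]^k$ and $C_i^{(d-k)}C_i^{(k)}=\I$; a symmetric SOS with the roles of $A$ and $B$ swapped handles Alice, and the state is then fixed by feeding the canonical observables back into the SOS relations. So the skeleton is right.

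Where you depart from the paper is in the mechanism for the ``heart of the argument.'' You propose to show that $B_1,B_2$ generate a copy of $M_d(\C)$ and then invoke uniqueness of its irreducible representation. The paper does \emph{not} do this abstractly; it proceeds by direct computation in two stages. First, from the $C$-relations it extracts $\Tr(B_i^n)=0$ for every proper divisor $n$ of $d$, and a cyclotomic-polynomial argument (the minimal polynomial of $\omega^n$ over $\Q$ divides the characteristic polynomial) forces all eigenvalue multiplicities of each $B_i$ to be equal; this is what gives $\dim\mathcal{H}_B\in d\,\mathbb{N}$ and lets one write $B_1=Z_d\otimes\I_{B'}$. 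Second, writing $B_2=\sum_{i,j}|i\rangle\langle j|\otimes F_{ij}$, the paper derives and solves an explicit linear system in the blocks $F_{ij}$ (coming from expanding $[C_2^{(1)}]^k=C_2^{(k)}$), and shows that after a further block-diagonal unitary fixing $B_1$ one obtains $B_2=T_d\otimes\I_{B'}$.

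Your representation-theoretic route is conceptually attractive, but note that to close it you would still need the same algebraic input the paper works hard for: the SOS relations by themselves do not hand you a presentation of $M_d(\C)$, and in particular do not immediately tell you that every irreducible constituent has dimension $d$. The paper's trace/cyclotomic step is exactly the missing dimension argument, and its block-by-block computation substitutes for the abstract ``identify the algebra.'' So the obstacle you flag is real, and the paper resolves it computationally rather than by any soft structural statement; if you want to pursue the $\ast$-algebra route you should expect to reproduce at least the trace-vanishing lemma (or an equivalent spectral rigidity statement) before uniqueness of the irreducible can be invoked.
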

Before proving this theorem a few comments are in order. First, as proven in Appendix \ref{sec:cglmp}, the observables $Z_d$ and $T_d$ are unitarily equivalent to 
the CGLMP measurements. Second, our self-testing statement exploits a single Bell inequality involving the minimal number of two measurements per observer for which one can observe nonlocality in quantum systems. The number of local observables one needs to measure in order to self-test a state is of key importance from the application point of view.
Finally, in the case of $d=2^N$, the theorem implies a  parallel self-testing protocol for $\ket{\phi^+_2}^{\otimes N}$, i.e., $N$ copies of maximally two-qubit entangled state (see Refs. \cite{wu16,mckague16} for the previous approaches).
\begin{proof}
The proof is quite technical and thus is deferred to 
Appendix \ref{sec:selftesting}. Here we only sketch its main ingredients.

Let a state $\ket{\psi_{AB}}$ and observables $A_x$ and $B_y$
maximally violate the SATWAP Bell inequality. Without loss of generality, we can assume that the reduced states of $\ket{\psi_{AB}}$ are full-rank. Exploiting the sum-of-squares decompositions of the corresponding Bell operator we can establish that 
Alice's and Bob's observables satisfy
%
\begin{equation}
    \Tr(A_x^n)=\Tr(B_y^n)=0
\end{equation}
for $x,y=1,2$ and any $n<d$ which is a divisor of the number of outcomes $d$. These conditions imply that the multiplicities of the eigenvalues of
all the observables $A_x$ and $B_y$ are equal, which has two consequences. First, their matrix dimensions are a multiple of $d$, meaning that they act on, respectively, $\mathcal{H}_A=\mathbb{C}^d\otimes\mathcal{H}_{A'}$ and $\mathcal{H}_B=\mathbb{C}^d\otimes\mathcal{H}_{B'}$ with $\mathcal{H}_{A'}$ and $\mathcal{H}_{B'}$ being some auxiliary Hilbert spaces of unknown but finite dimensions. Second, there exist unitary operations
$U_A:\mathcal{H}_A\to \mathcal{H}_{A}$ and $U_B:\mathcal{H}_B\to \mathcal{H}_{B}$ which allow us to bring Alice's and Bob's observables
to the form given in Eqs. (\ref{AiBimain}) and (\ref{AiBimain2}).

Finally, to obtain an analogous statement for the state we again employ the sum-of-squares decomposition, which in virtue of the relations (\ref{AiBimain}) and (\ref{AiBimain2}), imposes the following conditions on the state $\ket{\psi}_{AB}$:
\begin{eqnarray}
&    [(Z_d^*)^k\otimes Z^k_d\otimes\mathbbm{1}_{A'B'}]\ket{\hat{\psi}_{AB}}=\ket{\hat{\psi}_{AB}},&\nonumber\\
&    [(T_d^*)^k\otimes T^k_d\otimes\mathbbm{1}_{A'B'}]\ket{\hat{\psi}_{AB}}=\ket{\hat{\psi}_{AB}},&
\end{eqnarray}
where we have denoted $\ket{\hat{\psi}_{AB}}=U_A\otimes U_B\ket{\psi_{AB}}$. These conditions can easily be solved leading to Eq. (\ref{state}), which completes the proof.
\end{proof}

{\it Applications to randomness expansion.---}
An interesting application of our self-testing statement is DI randomness certification \cite{di4}. Let us consider the distrustful scenario where one of the observers performing the Bell test, say Bob, wants to generate random bits from the outcomes of his measurements, and Eve who supplies the measurement devices wants to access that outcomes. 

We show with the aid of the above self-testing result that the maximal violation of the SATWAP Bell inequality certifies $\log_2 d$ bits of perfect 
randomness in the outcomes of Alice's and Bob's measurements. Without loss of generality we can focus on Bob's measurements. One quantifies the randomness obtained in a Bell experiment as $-\log_2 G(x,\vec{p})$, where $G(x,\vec{p})$ is the local guessing probability defined through the following formula
\begin{eqnarray}
G(y,\p) = \sup_{S\in S_{\p}} \sum_b\bra{\psi_{ABE}}\I\otimes Q_y^{(b)}\otimes
E^{(b)}\ket{\psi_{ABE}},
\end{eqnarray}
where $\ket{\psi_{ABE}}$ is a three-partite state shared by Alice, Bob and Eve, 
$\{Q_{y}^{(b)}\}$ is a projective measurement performed by Bob, and $E=\{E^{(c)}\}$ is a $d$-outcome measurement on $\mathcal{H}_E$ whose result is Eve's best guess of Bob's outcome. Finally, $S_{\p}$ is the set of all possible strategies that Eve implements to guess the outcome of Bob's measurements, consisting of the state $\ket{\psi_{ABE}}$ and the measurement $E$, which reproduce the correlations $\vec{p}$ observed by Alice and Bob, i.e., 
\begin{equation}
p(a,b|x,y)=\langle\psi_{ABE}|P_x^{(a)}\otimes Q_{y}^{(b)}\otimes \I_E|\psi_{ABE}\rangle.
\end{equation}
Assume now that the inequality (\ref{eqSATWAP}) is maximally violated
by $\vec{p}$. This means that, up to local unitary operations, $\ket{\psi_{ABE}}=\ket{\phi_{d}^+}\ket{\mathrm{aux}_{A'B'E}}$ as well as
$P_x^{(a)}=\bar{P}_x^{(a)}\otimes\I_{A'}$ and $Q_y^{(b)}=\bar{Q}_y^{(b)}\otimes\I_{B'}$, where $\bar{P}_x^{(a)}$
and $\bar{Q}_y^{(b)}$ are eigenprojectors of the observables in Eqs. 
(\ref{AiBimain}) and (\ref{AiBimain2}), respectively. Taking all this into account along with the fact that Bob's observables are traceless, one finally finds that $G(y,\vec{p})=1/d$.

Thus,  $-\log_2 G(x,\p)=\log_2d$ bits of randomness can be certified using our self-testing statement. This is the maximum randomness that could be extracted from a system of local dimension $d$, whereas it requires one bit of randomness to encode the inputs. This gives rise to unbounded randomness expansion as $d$ can be arbitrary large.

{\it Conclusion and discussion.---}We propose the first, to the best of our knowledge self-testing statement for quantum system of arbitrary local dimension that unlike the previous approaches \cite{Yang,Coladangelo} does not rely on self-testing results for qubit states and exploits a truly $d$-outcome Bell inequality. Moreover, our self-testing result exploits only two measurements per party, which is the minimal number allowing 
the parties to observe nonlocal correlations.  This makes our results interesting from the experimental point of view. In fact, violation of the SATWAP Bell inequality has already been experimentally tested in Ref. \cite{Science}.


Several follow-up questions arise from our work. First, it would be interesting to study robustness of our self-testing statement and compare it to that of the scheme of \cite{Yang,Coladangelo}. Another interesting direction is to explore whether the $d$-outcome SATWAP Bell inequality can be modified to be maximally violated by partially entangled state and whether the resulting modifications can be used to make self-testing statements for those states, again with the minimal number of measurements per observer. Finally, it is also interesting to investigate whether, analogously to Refs. \cite{Optimal,randomness2}, our self-testing statement can be used to establish certification of the optimal amount of $2\log_2 d$ bits of local or global randomness from the maximally entangled state of two qudits with the aid of a local POVM or two projective measurements, respectively.
From a more general perspective, our self-testing result might provide non-trivial insights into the structure of the set of quantum correlations along the lines of \cite{andrea}, or might find applications in delegated quantum computation \cite{Delegated}. 


\textit{Note added.---}While working on this project we became aware of the work \cite{related}
exploring parallel self-testing with minimal number of measurements.

\textit{Acknowledgement.---}We would like to thank A. Ac\'in, P. Nurowski and A. Sawicki for useful discussions. This work is supported by Foundation for Polish Science through the First Team project
(No First TEAM/2017-4/31). JK acknowledges support from the Homing programme of the Foundation for Polish Science (grant agreement no.~POIR.04.04.00-00-5F4F/18-00) co-financed by the European Union under the European Regional Development Fund.

\clearpage

\onecolumngrid

\appendix

\section{Proof of self-testing}\label{sec:selftesting}

In this appendix we provide the full proof of theorem stated in the main text.
However, before diving into more detailed considerations we recall certain facts that will be used throughout the proof. First, let us recall that the explicit form of 
the SATWAP Bell inequality with two measurements per observer is
 \begin{equation}\label{app:SATWAP}
\mathcal{I}_{d}=\sum_{k=1}^{d-1} \left( a_k \langle A_1^{k}B_1^{d-k}\rangle +a_k^*\omega^k \langle A_1^{k}B_2^{d-k} \rangle 
+ a_k^* \langle A_2^{k} B_1^{d-k} \rangle +  a_k \langle A_2^{k} B_2^{d-k} \rangle \right) \leq \beta_{C}^{d},
\end{equation}
where 
\begin{equation}\label{app:ak}
     a_k=\frac{1}{\sqrt{2}}\omega^{\frac{2k-d}{8}}=\frac{1-\mathbbm{i}}{2}\omega^{k/4}=\frac{1-\mathbbm{i}}{2}\mathrm{e}^{\frac{\pi \mathbbm{i}k}{2d}}
\end{equation}
and the classical bound $\beta_{C}^d$ is given explicitly by 
\begin{equation}
    \beta_{C}^d=\frac{1}{2}\left[3\cot\left(\frac{\pi}{4d}\right)
    -\cot\left(3\frac{\pi}{4d}\right)\right]-2.
\end{equation}
Then, the maximal quantum value of $\mathcal{I}_d$ turns out to be $\beta_Q^d=2(d-1)$. 
Let us also recall that the sum-of-squares decomposition of the corresponding Bell operator $\mathcal{B}_d$ is
\begin{equation}
    \beta_{Q}^d\mathbbm{1}-\mathcal{B}_d=\frac{1}{2}\sum_{k=1}^{d-1}\left(P_{1,k}^{\dagger}P_{1,k}+P_{2,k}^{\dagger}P_{2,k}\right),
\end{equation}
with
\begin{equation}\label{SOSR1}
    P_{i,k}=\mathbbm{1}-A_{i}^{k}\otimes C_{i}^{(k)}
\end{equation}
for $i=1,2$ and $k=1,\ldots,d-1$, where
\begin{equation}\label{Coperators}
C_1^{(k)}=a_k B_1^{-k} + a_{k}^{*}\omega^k B_2^{-k}, \qquad 
C_2^{(k)}=a_{k}^{*}B_1^{-k} + a_k B_2^{-k}.
\end{equation}
Let us notice here that $a_{d-k}=a_{k}^{*}$ and therefore $C_i^{(d-k)}=[C_{i}^{(k)}]^{\dagger}$ for 
any $k=1,\ldots,d-1$ and $i=1,2$.

Introducing finally the following matrices 
\begin{equation}\label{ZdTd}
Z_d=\sum_{i=0}^{d-1}\omega^i\proj{i},\qquad T_d=\sum_{i=0}^{d-1}\omega^{i+\frac{1}{2}}\proj{i}-\frac{2}{d}\sum_{i,j=0}^{d-1}(-1)^{\delta_{i,0}+\delta_{j,0}}\omega^{\frac{i+j+1}{2}}|i\rangle\!\langle j|.
\end{equation}
we can now state and prove our main theorem.

\begin{thm*}
Assume that the SATWAP Bell inequality (\ref{app:SATWAP}) is maximally violated by a state $\ket{\psi}\in\mathcal{H}_{A}\otimes\mathcal{H}_B$ and observables $A_i,B_i$ with $i=1,2$ acting on $\mathcal{H}_A$ and $\mathcal{H}_B$, respectively. Then, the following statements hold true for any $d$:
\begin{enumerate}

\item There exist local unitaries $U_A: \mathcal{H}_A \rightarrow \C^d \otimes \mathcal{H}_{A'}$ and $U_B: \mathcal{H}_B \rightarrow \C^d \otimes \mathcal{H}_{B'}$ such that 
\begin{equation} \label{AiBi}
U_B B_1 U^{\dagger}_B = Z_d \otimes \I_{B'}, \qquad U_B B_2 U^{\dagger}_B = T_d \otimes \I_{B'}, 
\end{equation}
and
\begin{equation}\label{AiBi2}
U_A A_1 U^{\dagger}_A =  \left[ a_1^* Z_d + 2(a^*_1)^3 T_d \right] \otimes \I_{A'} \ , \qquad  U_A A_2 U^{\dagger}_A =  \left( a_1 Z_d - a^*_1 T_d \right) \otimes \I_{A'} \ ,
\end{equation} 
where $a_1$ is defined in Eq. (\ref{app:ak}), whereas $Z_d$ and $T_d$ are defined in Eq. \eqref{ZdTd}; 
\item  The state is $\ket{\psi}$ equivalent to the maximally entangled state of two qudits $\ket{\phi^+_d}$ in the sense that
\be \label{State}
U_A\otimes U_B \ket{\psi} = \ket{\phi^+_d} \otimes \ket{\tilde{\psi}}, 
\ee
where $\I_{A'}$ and $\I_{B'}$ are identity operators acting on  $\mathcal{H}_{A'}$ and $\mathcal{H}_{B'}$, respectively, and $\ket{\tilde{\psi}}$ is some state from $\mathcal{H}_{A'}\otimes \mathcal{H}_{B'}$.
\end{enumerate}
\end{thm*}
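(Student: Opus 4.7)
The plan is to use the sum-of-squares (SOS) decomposition of the Bell operator as the main lever. From $\beta_Q^d \mathbbm{1} - \mathcal{B}_d = \tfrac{1}{2}\sum_{i,k} P_{i,k}^{\dagger}P_{i,k}$ with $P_{i,k} = \mathbbm{1} - A_i^{k}\otimes C_i^{(k)}$ and $C_i^{(k)}$ as given in \eqref{Coperators}, the assumption of maximal violation forces $P_{i,k}\ket{\psi} = 0$ for all $i\in\{1,2\}$ and $k=1,\ldots,d-1$. This yields the basic identity $(A_i^{k}\otimes \mathbbm{1})\ket{\psi} = (\mathbbm{1}\otimes [C_i^{(k)}]^{\dagger})\ket{\psi}$, which drives the rest of the argument. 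A second SOS decomposition in which the roles of Alice and Bob are exchanged (obtained by the $P_{i,k} = \tilde C_i^{(k)} \otimes \mathbbm{1} - \mathbbm{1}\otimes B_i^{d-k}$ form sketched in the commented preliminaries) will play the symmetric role for $A_1, A_2$.

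First I would extract algebraic constraints on the $C_i^{(k)}$. Assuming WLOG that the reduced states are full-rank (otherwise restrict to the supports), iterating the basic identity gives $C_i^{(k+l)} = C_i^{(k)} C_i^{(l)}$, $C_i^{(d-k)} = [C_i^{(k)}]^{\dagger}$, and $(C_i^{(k)})^{d} = \mathbbm{1}$. Because $C_i^{(1)}$ is a simple linear combination of $B_1^{-1}$ and $B_2^{-1}$, demanding that its $d$-th power equal the identity is an extremely restrictive spectral condition. In particular, expanding these relations and tracing yields $\mathrm{Tr}(A_x^{n}) = \mathrm{Tr}(B_y^{n}) = 0$ for every divisor $n<d$ of $d$, which implies that the spectra of all four observables consist of the full set of $d$-th roots of unity with equal multiplicities. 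This decomposes $\mathcal{H}_A = \mathbb{C}^{d}\otimes\mathcal{H}_{A'}$ and $\mathcal{H}_B = \mathbb{C}^{d}\otimes\mathcal{H}_{B'}$, and produces a unitary $U_B$ that diagonalizes $B_1$ as $Z_d\otimes\mathbbm{1}_{B'}$.

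The next and most delicate step is to pin down $B_2$ in this basis. Writing $U_B B_2 U_B^{\dagger}$ as an arbitrary unitary with eigenvalues the $d$-th roots of unity and substituting into $(C_1^{(1)})^{d} = \mathbbm{1}$ and $C_1^{(1)}[C_1^{(1)}]^{\dagger} = \mathbbm{1}$, I would solve the resulting operator equations block-by-block in the eigenspaces of $B_1$. This is where the characteristic rank-one structure of $T_d$ in \eqref{ZdTd} — the diagonal $\omega^{i+1/2}$ part plus the $(2/d)(-1)^{\delta_{i,0}+\delta_{j,0}}\omega^{(i+j+1)/2}$ correction — should emerge: essentially, the constraint forces $B_2^{-1}$ to be a diagonal matrix plus a rank-one perturbation dictated by $a_1$ and the CGLMP coefficients. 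An analogous computation using the complementary SOS decomposition, with $\tilde C_i$ playing the role of $C_i$, fixes $U_A$ and yields the formulas \eqref{AiBi2} for $A_1, A_2$ as explicit linear combinations of $Z_d$ and $T_d$.

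Finally, with the observables in canonical form, I would translate the basic identity for the state into stabilizer conditions. Writing $\ket{\hat\psi} = (U_A\otimes U_B)\ket{\psi}$, the relations $(A_i^{k}\otimes\mathbbm{1})\ket{\hat\psi} = (\mathbbm{1}\otimes [C_i^{(k)}]^{\dagger})\ket{\hat\psi}$ reduce, after using the closed-form expressions for $A_i$, $C_i^{(k)}$ in terms of $Z_d, T_d$, to the two families
\begin{align}
[(Z_d^{*})^{k}\otimes Z_d^{k}\otimes\mathbbm{1}_{A'B'}]\ket{\hat\psi} &= \ket{\hat\psi},\\
[(T_d^{*})^{k}\otimes T_d^{k}\otimes\mathbbm{1}_{A'B'}]\ket{\hat\psi} &= \ket{\hat\psi}.
\end{align}
Since $Z_d$ and $T_d$ have non-parallel eigenbases (they are essentially unitarily equivalent to two CGLMP measurements), the joint commutant of $\{O^{*}\otimes O : O\in\{Z_d,T_d\}\}$ on $\mathbb{C}^{d}\otimes\mathbb{C}^{d}$ is spanned by the projector onto $\ket{\phi_d^{+}}$, so these constraints force $\ket{\hat\psi} = \ket{\phi_d^{+}}\otimes\ket{\mathrm{aux}_{A'B'}}$, giving \eqref{State}. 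I expect the main obstacle to be the derivation of the exact form of $T_d$: the explicit rank-one correction with its $d$-dependent prefactor $2/d$ and sign pattern $(-1)^{\delta_{i,0}+\delta_{j,0}}$ should drop out of a careful Fourier-type manipulation on $\mathbb{Z}_d$ once the algebra of $C_1^{(1)}$ is written in the eigenbasis of $B_1$, but making this computation rigorous (and ruling out spurious off-diagonal blocks that act on $\mathcal{H}_{B'}$) is where the real work lies.
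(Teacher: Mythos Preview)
Your proposal is correct and follows essentially the same route as the paper: SOS $\Rightarrow$ operator identities for the $C_i^{(k)}$ $\Rightarrow$ trace conditions $\Rightarrow$ equal-multiplicity spectra and tensor factorisation $\Rightarrow$ block analysis of $B_2$ in the $Z_d$-basis $\Rightarrow$ symmetric argument for Alice via the second SOS $\Rightarrow$ stabiliser conditions on the state. One small caveat: when you say you will pin down $B_2$ by substituting into $(C_1^{(1)})^{d}=\mathbbm{1}$ and $C_1^{(1)}[C_1^{(1)}]^{\dagger}=\mathbbm{1}$, be aware that the paper actually needs the full multiplicative family $C_i^{(k)}=[C_i^{(1)}]^{k}$ for \emph{all} $k$ (equivalently your $C_i^{(k+l)}=C_i^{(k)}C_i^{(l)}$), not just the two endpoint relations, to derive a recursion of the form $B_2^{k+1}=-k\omega^{(k+1)/2}B_1^{k+1}+\omega^{k/2}\sum_{m=0}^{k}B_1^{m}B_2B_1^{k-m}$ from which the $F_{ij}$ blocks are extracted; using only unitarity and $d$-th-power-identity is likely too weak to force the precise rank-one correction in $T_d$. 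Since you have already recorded the full multiplicative structure, this is only a matter of which relations you feed into the block computation, not a gap in the strategy.
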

\begin{proof}The proof consists of three major steps to prove the theorem. First, we establish that the observables measured by Bob act on a Hilbert space of dimension multiple of $d$ and simultaneously obtain their explicit form. Second, exploiting the symmetry of the Bell-operator we recover the desired form of $A_i$ and subsequently characterize the shared entangled state.

\textit{\textbf{Observables.}} We begin our proof by showing that with the aid of the sum-of-squares decomposition (\ref{SOSR1}) one can prove that the following sets of conditions hold true:
\begin{equation}\label{Ccond1}
    C_i^{(k)}=\left[C_i^{(1)}\right]^k
\end{equation}
and
\begin{equation}\label{Ccond2}
    C_{i}^{(d-k)}C_{i}^{(k)}=\I
\end{equation}
for $i=1,2$ and $k=1,\ldots,d-1$.

To prove (\ref{Ccond1}) we first notice that for the state 
$\ket{\psi}$ and observables $A_i$ and $B_i$ the sum-of-squares decomposition 
(\ref{SOSR1}) implies that
\begin{equation}\label{ACcondition}
    A_{i}^k\otimes C_{i}^{(k)}\ket{\psi}=\ket{\psi}
\end{equation}
are satisfied for $i=1,2$ and $k=1,\ldots,d-1$. Exploiting the fact that $A_i$ are unitary
we can rewrite the above as
\begin{equation}
    \I\otimes C_{i}^{(k)}\ket{\psi}=(A_{i}^{k})^{\dagger}\otimes \I\ket{\psi}=(A_{i}^{\dagger})^{k}\otimes \I\ket{\psi}.
\end{equation}
We then make use of (\ref{ACcondition}) for $k=1$ to see that $A_{i}^{\dagger}\otimes \I\ket{\psi}=\I\otimes C_{i}^{(1)}\ket{\psi}$, which allows us to rewrite the above relation as
\begin{equation}
    \I\otimes C_{i}^{(k)}\ket{\psi}=\I \otimes \left[C_{i}^{(1)}\right]^{k}\ket{\psi},
\end{equation}
which is equivalent to the relation (\ref{Ccond1}) on the support of the second subsystem of $\ket{\psi}$.

To prove (\ref{Ccond2}) we again consider relation (\ref{ACcondition}) and 
apply $A_i^{d-k}\otimes C_i^{(d-k)}$ to it. This due to the fact that $A_i$ is unitary gives
\begin{equation}
    \I\otimes C_i^{(d-k)}C_i^{(k)}\ket{\psi}=\ket{\psi},
\end{equation}
which is equivalent to (\ref{Ccond2}) on the support of the second subsystem of $\ket{\psi}$.

In what follows we show that the conditions (\ref{Ccond1}) and (\ref{Ccond2})
are enough to fully characterize Bob's observables. First, in Lemma 
\ref{le:traceless} we show that these conditions imply that 
\begin{equation}
    \Tr(B_1^n)=\Tr(B_2^n)=0
\end{equation}
for any positive integer $n$ which is a proper divisor of $d$. Second, Lemma \ref{le:pol} tells us that the multiplicities of the eigenvalues of each of Bob's 
observables are the same. This together with the fact that $B_i$ have $d$ different eigenvalues allows us to draw two conclusions. 
First, the dimension of Bob's Hilbert space $\mathcal{H}_B$ is multiple of the number of outcomes $d$, meaning that 
\begin{equation}
\mathcal{H}_{B}=\mathbb{C}^d\otimes\mathcal{H}_{B'},    
\end{equation}
where $\mathcal{H}_{B'}$ is some auxiliary Hilbert space of unknown but
finite dimension. Second, there exists a unitary matrix $V_B:\mathcal{H}_{B}\rightarrow \mathcal{H}_{B}$ such that 
\begin{equation}
    \widetilde{B}_1:=V_BB_1V_B^{\dagger}=Z_d\otimes\I_{B'},
\end{equation}
where $Z_d$ is defined in Eq. (\ref{ZdTd}) and $\I_{B'}$ is the identity acting on $\mathcal{H}_{B'}$. Thirdly and finally, Lemma \ref{le:Fij} implies that 
there exists another unitary operation $\overline{V}_B$ that preserves $\widetilde{B}_1=Z_d\otimes \I_{B'}$
and 
\begin{equation}
    \overline{V}_B (V_B B_2V_B^{\dagger})\overline{V}_B^{\dagger}=T_d\otimes \I_{B'}
\end{equation}
with $T_d$ defined in Eq. (\ref{ZdTd}). Thus, there exist a unitary operation $U_B:\mathcal{H}_{B}\to \mathcal{H}_{B}$ given by $U_{B}=\overline{V}_BV_B$ such that (\ref{AiBi}) hold true. This completes the first part of the proof.

\textit{\textbf{Alice's observables.}} Let us now concentrate on Alice's observables. To determine them we will exploit the symmetries of the SATWAP Bell inequality. Precisely, let with the aid of the facts that $a_{d-k}=a_k^{*}$ and that $A_i^d=B_i^d=\I$,
the SATWAP Bell expression $\mathcal{I}_{d}$ can also be written as
\begin{equation}
    \mathcal{I}_d=\sum_{k=1}^{d-1}\left(\overline{C}_1^{(k)}\otimes B_{1}^{k}+\overline{C}_2^{(k)}\otimes B_2^{k}\right),
\end{equation}
where 
\begin{equation} \label{cit}
\overline{C}_1^{(k)}=
a_{k}^*A_1^{-k}+a_{k}A_2^{-k},\qquad
\overline{C}_2^{(k)}=
\omega^{-k}a_{k}A_1^{-k}+a_{k}^*A_2^{-k}.
\end{equation}
Moreover, one can construct another sum-of-squares
decomposition for the SATWAP Bell inequality in terms of these combinations
of Alice's observables. Precisely, one has
\begin{equation}\label{SOSA}
    \beta_{Q}^d\mathbbm{1}-\mathcal{B}_d=\frac{1}{2}\sum_{k=1}^{d-1}\left(\overline{P}_{1,k}^{\dagger}\overline{P}_{1,k}+\overline{P}_{2,k}^{\dagger}\overline{P}_{2,k}\right),
\end{equation}
where
\begin{equation}
    \overline{P}_{i,k}=\I-\overline{C}_i^{(k)}\otimes B_i^k.
\end{equation}
Thus, using the same reasoning as in the case of Bob's observables one can 
show that the $\overline{C}_i^{(k)}$ operators must satisfy the same relations as
the $C_i^{(k)}$ operators, that is,
\begin{equation}\label{Cbar}
    \overline{C}_i^{(k)}=\left[\overline{C}_i^{(1)}\right]^k
\end{equation}
as well as
\begin{equation}
    \overline{C}_i^{(d-k)}\overline{C}_i^{(k)}=\I
\end{equation}
with $i=1,2$ and $k=1,\ldots,d-1$. The crucial step now is to realize that 
$\overline{C}_1^{(k)}$ and $\overline{C}_2^{(k)}$ have the same 
form as, respectively, $C_2^{(k)}$ and $C_1^{(k)}$ and hence 
from the whole analysis performed for Bob's observables 
we know that 
\begin{equation}
    \mathcal{H}_A=\mathbb{C}^d\otimes\mathcal{H}_{A'}
\end{equation}
with $\mathcal{H}_{A'}$ being some auxiliary Hilbert space corresponding to 
the party $A$ of unknown but finite dimension. Moreover, there exists a unitary operation
$V_A: \mathcal{H}_{A}\to \mathcal{H}_{A}$ such that 
\begin{equation}
    \widetilde{A}_1:=V_A A_1V^{\dagger}_A=Z_d\otimes \I_{A'},\qquad \widetilde{A}_2:=V_A A_2 V^{\dagger}_A=T_d\otimes \I_{A'}.
\end{equation}
Thus, similarly to Bob's observables we can fully characterize Alice's measurements. In principle we could end our characterization of observables already here, however, for further benefits we will apply another local unitary operation $W_A:\mathbb{C}^d\to\mathbb{C}^d$ 
to the qudit part of $\widetilde{A}_i$ observables so that 
\begin{equation}\label{dupa1}
    W_A Z_d W_A^{\dagger}=a_1^{*}Z_d+2(a_1^{*})^3 T_d
\end{equation}
and
\begin{equation}\label{dupa2}
       W_A T_d W_A^{\dagger}=a_1Z_d-a_1^{*}T_d.
\end{equation}
The existence of $W_A$ is shown under Fact \ref{fact:cglmp}-\ref{fact:CtoZ} in Appendix \ref{sec:cglmp}. Thus, we have proven the existence of a unitary operation $U_A=(W_A\otimes\I_{A'}) V_A$ such that Eq. (\ref{AiBi2}) hold true. Importantly, by combining Eqs. (\ref{dupa1}), (\ref{dupa2}) and (\ref{Cbar}), application of the unitary $U_A$ to $\overline{C}_{i}^{(k)}$ operators allows us to bring them to 
\begin{equation}
    U_A \overline{C}_1^{(k)}U_A^{\dagger}=(Z_d^{\dagger})^k\ot \I_{A'},\qquad U_A \overline{C}_2^{(k)}U_A^{\dagger}=(T_d^{\dagger})^k\ot \I_{A'}.
\end{equation}
In particular, if Alice's observables are transformed by applying the unitary $U_A$, then $\overline{C}_1^{(1)}=Z_d^*$ and $\overline{C}_2^{(1)}=T_d^*$.
As we will see next this choice of the $A_i$ observables as well as the $\overline{C}_i^{(1)}$ operators guarantees that the self-tested state (up to some additional degrees of freedom) is the maximally
entangled state of two qudits written in the standard basis
\begin{equation}
    \ket{\phi_{d}^+}=\frac{1}{\sqrt{d}}\sum_{i=0}^{d-1}\ket{ii}.
\end{equation}

\textit{\textbf{State.}} Let us finally prove Eq. (\ref{State}). From the fact that $\ket{\psi}$ violates maximally the SATWAP Bell inequality we know, by virtue of the SOS decomposition (\ref{SOSA}) that 
\begin{equation}
    \overline{C}^{(k)}_1\otimes B_1^k\ket{\psi}=\ket{\psi},
\end{equation}
with $i=1,2$ and $k=1,\ldots,d-1$, 
which after application of $U_A\otimes U_B$ can be rewritten as
\begin{equation}\label{cc1}
    [(Z_d^{*})^{(k)}\otimes Z_d^{k} \otimes \I_{A'B'}]\ket{\psi}=\ket{\psi}
\end{equation}
and
\begin{equation}\label{cc2}
    [(T_d^{*})^{(k)}\otimes T_d^{k} \otimes \I_{A'B'}]\ket{\psi}=\ket{\psi},
\end{equation}
where by $\I_{A'B'}$ we denote the identity acting on $\mathcal{H}_{A'}\otimes \mathcal{H}_{B'}$. 

Since the state $\ket{\psi}$ belongs to $\mathbb{C}^d\otimes\mathbb{C}^d\otimes\mathcal{H}_{A'}\otimes\mathcal{H}_{B'}$ we can decompose it as 
\begin{equation}
\ket{\psi}=\sum^{d-1}_{i,j=0}\ket{ij}\ket{\psi_{ij}},
\end{equation}
where $\ket{\psi_{ij}}\in\mathcal{H}_{A'}\otimes \mathcal{H}_{B'}$ are some in general nonorthogonal and not normalized states. With this decomposition the condition \eqref{cc1} with $k=1$ implies that 
\begin{equation}
\sum_{i,j=0}^{d-1}\omega^{j-i}\ket{ij}\ket{\psi_{ij}}=\sum_{i,j=0}^{d-1}\ket{ij}\ket{\psi_{ij}}.
\end{equation}
This condition directly implies that 
\be \label{sf1}
\ket{\psi_{ij}}=0
\ee 
for any pair $i,j=0,\ldots,0$ such that $i\neq j$.

Let us now exploit the other condition \eqref{cc2} with $k=1$, which after projecting onto 
the ket $\langle ii|$ gives
\begin{eqnarray}
\ket{\psi_{ii}}=\sum_{k=0}^{d-1}\langle i|T_d^*|k\rangle\langle i|T_d|k\rangle\ket{\psi_{kk}}=\sum_{k=0}^{d-1}|\langle i|T_d|k\rangle|^2\ket{\psi_{kk}}
\end{eqnarray}
where we have used the fact $\langle i|T_d^*|k\rangle=\langle k|T_d^*|i\rangle$ as $T_d$ is symmetric. Using the explicit form of $T_d$ given in \eqref{ZdTd} we can 
rewrite the above as
\begin{eqnarray}
\ket{\psi_{ii}}=\frac{4}{d^2}\sum^{d-1}_{k=0}\ket{\psi_{kk}}+\left[\left(1-\frac{2}{d}\right)^2-\frac{4}{d^2}\right]\ket{\psi_{ii}} = \frac{4}{d^2}\sum^{d-1}_{k=0}\ket{\psi_{kk}}+\left[1-\frac{4}{d}\right]\ket{\psi_{ii}}. 
\end{eqnarray}
We now move the $\ket{\psi_{ii}}$ ket to the left-hand side of
the equality (for $d=4$ there is nothing to move) and obtain
the following relation,
\begin{equation} \label{sf2}
\forall i, \quad \frac{4}{d}\ket{\psi_{ii}}=\frac{4}{d^2}\sum^{d-1}_{k=0}\ket{\psi_{kk}},
\end{equation}
from which it follows that all $\ket{\psi_{ii}}$ are equal. 
This together with (\ref{sf1}) allows us to conclude that 
\begin{equation}
    U_A\otimes U_B\ket{\psi}=\frac{1}{\sqrt{d}}\sum_{i=0}^{d-1}\ket{ii}\otimes \ket{\widetilde{\psi}}
\end{equation}
with $\ket{\widetilde{\psi}}=\sqrt{d}\ket{\psi_{ii}}$. We thus recover (\ref{State}), completing the proof.
\end{proof}


Below we append three lemmas used in the proof of the above theorem. Remark that, series of Observations are employed inside the proof of Lemma \ref{le:traceless} and Lemma \ref{le:Fij}.

\begin{lem}\label{le:traceless}
Consider two unitary observables $B_i \ (i\in \{1,2\})$ acting on a finite-dimensional Hilbert space whose eigenvalues are $\w^l \ (l \in \{0,\dots,d-1\})$. If they satisfy the conditions \eqref{Ccond1} and \eqref{Ccond2}, then for any proper divisor $n$ of $d$, that is, a positive integer such that $n< d$ and $d/n\in\mathbb{N}$,
\be \label{traceB}
\Tr(B^n_1) = \Tr(B^n_2) = 0.
\ee 
\end{lem}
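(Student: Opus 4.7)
The plan is to reduce the desired trace conditions on $B_i$ to a trace condition on powers of the auxiliary unitaries $C_i^{(1)}$, using the linear structure of \eqref{Coperators}. First, I would combine the two hypotheses to obtain $(C_i^{(1)})^d = \I$: taking $k = d-1$ in \eqref{Ccond1} gives $C_i^{(d-1)} = (C_i^{(1)})^{d-1}$, which substituted into \eqref{Ccond2} at $k=1$ (namely $C_i^{(d-1)} C_i^{(1)} = \I$) forces $(C_i^{(1)})^d = \I$. In particular, each $C_i^{(1)}$ is unitary with spectrum contained in $\{\omega^l : l = 0,\ldots,d-1\}$.

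Next, I would take the trace of the identity $C_i^{(k)} = (C_i^{(1)})^k$, using the explicit linear forms \eqref{Coperators}, to obtain, for each $k \in \{1,\ldots,d-1\}$, the $2\times 2$ system
\begin{align*}
a_k \Tr(B_1^{-k}) + a_k^* \omega^k \Tr(B_2^{-k}) &= \Tr((C_1^{(1)})^k), \\
a_k^* \Tr(B_1^{-k}) + a_k \Tr(B_2^{-k}) &= \Tr((C_2^{(1)})^k).
\end{align*}
A direct calculation with $a_k = (1-\mathbbm{i})\omega^{k/4}/2$ gives the determinant $a_k^2 - a_k^{*2}\omega^k = -\mathbbm{i}\,\omega^{k/2}$, which is nonzero for every $k$. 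Consequently, whenever both right-hand sides vanish, the system forces $\Tr(B_i^{-k}) = 0$, and unitarity of $B_i$ then yields $\Tr(B_i^k) = \overline{\Tr(B_i^{-k})} = 0$ as required.

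The main obstacle is therefore the spectral step: proving $\Tr((C_i^{(1)})^k) = 0$ for every proper divisor $k$ of $d$. The natural target is to show that the eigenvalues of $C_i^{(1)}$ distribute symmetrically over the $d$-th roots of unity, in the sense that $\sum_l M_l^{(i)}\omega^{lk} = 0$ whenever $k \mid d$ and $k < d$, where $M_l^{(i)}$ denotes the multiplicity of $\omega^l$ in $C_i^{(1)}$; a strong form would be to prove all multiplicities equal, which would in fact give vanishing trace at every $k$ not divisible by $d$. To extract such a spectral property without yet knowing a concrete form for $B_1$ and $B_2$, the proof must rely on the series of Observations alluded to in the paper, derived by expanding $(a_1 B_1^{-1} + a_1^*\omega B_2^{-1})^k$ and matching to the rigid two-term form $a_k B_1^{-k} + a_k^*\omega^k B_2^{-k}$; this produces a cascade of operator identities among mixed monomials in $B_1^{-1}$ and $B_2^{-1}$ that must be marshalled into the required constraint on the spectrum of $C_i^{(1)}$. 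This bookkeeping, rather than any high-level idea, is the technically demanding step of the argument.
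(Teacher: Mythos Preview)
Your reduction to the vanishing of $\Tr\big((C_i^{(1)})^k\big)$ is formally correct but does not advance the argument: since $C_i^{(1)}$ is itself a unitary with spectrum in the $d$-th roots of unity, the claim ``$\Tr((C_i^{(1)})^k)=0$ for all proper divisors $k$'' is structurally identical to the claim you are trying to prove for $B_i$, and the invertible $2\times 2$ system merely shows the two are equivalent. The substantive content of the lemma is therefore entirely deferred to what you call ``the main obstacle,'' and your sketch for that part points in the wrong direction.

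The paper does \emph{not} expand $(C_i^{(1)})^k$ and match it against the two-term form of $C_i^{(k)}$; that would indeed produce an unwieldy cascade. The key operator identity instead comes from \eqref{Ccond2}: expanding $C_2^{(d-k)}C_2^{(k)}=\I$ with $a_{d-k}=a_k^*$ and $|a_k|^2=1/2$ collapses to the single clean relation
\[
B_1^{k}B_2^{-k}=\omega^{-k}\,B_2^{k}B_1^{-k}\qquad(k\in\mathbb{Z}).
\]
Repeated application of this identity (multiply, trace, iterate) yields $\Tr(B_1^x)=\omega^{sx}\Tr\big(B_1^{(2s+1)x}B_2^{-2sx}\big)$ for every $s\ge 0$, and the analogue for $B_2$; a separate use of \eqref{Ccond1} at $k=2x$ gives the cross relation $\Tr(B_1^x)=\omega^{-x/2}\Tr(B_2^x)$ for $x\le\lfloor d/2\rfloor$. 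Choosing $s$ so that the exponents hit multiples of $d$, and comparing with the cross relation, then forces $\Tr(B_i^n)=0$ via a case split on the parity of $d$ (and of $d/n$). None of these ingredients appear in your outline, so the proposal has a genuine gap at its central step.
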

\begin{proof}Plugging the explicit form of $C_2^{(k)}$ given in Eq. (\ref{Coperators}) into Eq. (\ref{Ccond2}) and using the fact that $B_i^{d-k}=B_i^{-k}$, we arrive at
\be 
(a_{d-k}^{*}B_1^{k} + a_{d-k} B_2^{k})  (a_{k}^{*}B_1^{-k} + a_k B_2^{-k}) = \I.
\ee
Using then the fact that $a^*_k = a_{d-k}$ and $a^*_k/a_k = \mathbbm{i}\omega^{\frac{k}{2}}$, a simple calculation leads us to the following condition
\begin{equation}\label{Obs22}
B_1^{k}B_2^{-k}=\omega^{-k}B_2^{k}B_{1}^{-k}
\end{equation} 
with $k=1,\ldots,d-1$. Due to the fact that $B^d_i= \I$, the above relation \eqref{Obs22} extends to any integer $k \in \mathbb{Z}$. We now establish some relations under observations \ref{fact:4s} and \ref{fact:ObsTraces} which will be used later in the proof. 
\begin{obs} \label{fact:4s}
First, we show that the following identities hold true for any non-negative integers $s,x,y\in \mathbb{N} \cup \{0\}$
\begin{equation}\label{NewId1}
    \Tr(B_1^x)=\omega^{sx}\,\Tr\left(B_1^{(2s+1)x}B_2^{-2sx}\right).
\end{equation}
and
\begin{equation}\label{NewId2}
    \Tr(B_2^y)=\omega^{sy}\,\Tr\left(B_1^{2sy}B_2^{(-2s+1)y}\right).
\end{equation}
%
\end{obs}
We present only the proof of the first identity as that of the second one is analogous. 
It is pretty straightforward and consists of multiple application of the identity (\ref{Obs22}).
Let us thus focus on the right-hand side of Eq. (\ref{NewId1}) and 
consider (\ref{Obs22}) for $k=2sx$, multiply it by $B_1^x$ and trace both 
sides of the resulting equation. This gives
\begin{equation}\label{Eq1}
    \Tr\left(B_1^{(2s+1)x}B_2^{-2sx}\right)=\omega^{-2sx}\Tr\left(B_2^{2sx}B_1^{(-2s+1)x}\right).
\end{equation}
We consider again (\ref{Obs22}) for $k=(2s-1)x$, 
multiply it by $B_2^x$ and then trace both sides, which results in
\begin{equation}\label{Eq2}
    \Tr\left(B_2^{2sx}B_1^{-(2s-1)x}\right)=\omega^{(2s-1)x}\left(B_1^{(2s-1)x}B_{2}^{-(2s-2)x}\right).
\end{equation}
After plugging Eq. (\ref{Eq2}) into Eq. (\ref{Eq1}) we obtain
\begin{equation}\label{Eq3}
    \Tr\left(B_1^{(2s+1)x}B_2^{-2sx}\right)=\omega^{-x}\left(B_1^{(2s-1)x}B_{2}^{-(2s-2)x}\right).
\end{equation}
We have thus lowered the power of $B_1$ from $(2s+1)x$ in Eq. (\ref{Eq1}) 
to $(2s-1)x$ in Eq. (\ref{Eq3}). We repeat this double substitution until the 
power of $B_1$ is $x$, each time acquiring the phase $\omega^{-x}$ 
(notice that in this case the power of $B_2$ goes from $-2sx$ in 
Eq. (\ref{Eq1}) to zero). As a result we arrive at
\begin{equation}
    \Tr\left(B_1^{(2s+1)x}B_2^{-2sx}\right)=\omega^{-sx}\Tr(B_1^x),
\end{equation}
which is what we wanted to show.

\begin{obs}\label{fact:ObsTraces}
Next, the following simple relation between traces of powers of $B_i$ operators holds true
\begin{equation}\label{traces}
\Tr(B_1^x)=\omega^{-\frac{x}{2}}\Tr(B_2^x), \qquad x = 1,\dots, \left\lfloor{\frac{d}{2}}\right\rfloor.
\end{equation}
\end{obs}
In order to prove this observation let us consider \eqref{Ccond1} for $k=2x$, which we can equivalently state as
\begin{equation}
    \left[C_2^{(2x)}\right]^{\dagger}=\left\{\left[C_2^{(x)}\right]^2\right\}^{\dagger}.
\end{equation}
This after taking into account \eqref{Coperators} implies
\begin{equation}
\omega^{\frac{x}{2}}B_1^{2x}+\omega^{-\frac{x}{2}}B_2^{2x}=\{B_1^x,B_2^x\}. 
\end{equation}
We then multiply the above by $B_1^{-x}$ and by taking trace on both sides we find
\begin{equation}\label{Tinto}
\omega^{\frac{x}{2}}\Tr(B_1^x)+\omega^{-\frac{x}{2}}\Tr(B_1^{-x}B_2^{2x})=2\Tr(B_2^x).
\end{equation}
We then use \eqref{Obs22} with $k=x$, i.e., $B_2^xB_1^{-x}=\omega^x B_1^xB_2^{-x}$.
By multiplying it by $B_2^x$ and tracing both sides we obtain 
\begin{equation} \label{bb}
\Tr(B_1^{-x}B_2^{2x})=\omega \Tr(B_1^x).
\end{equation}
Further, substituting the expression of $\Tr(B_1^{-x}B_2^{2x})$ from Eq. \eqref{bb} into Eq. \eqref{Tinto} leads us to 
\eqref{traces}. 

Let us now make use of the above observations for finally prove the validity of \eqref{traceB}. 
We consider the cases of even and odd $d$ separately. 

\textit{\textbf{Even} $d$.} 
Let $n$ be a divisor of $d$, i.e., $d/n \in \mathbb{N}$. There are two possibilities, 
$d/n$ is even or odd, i.e., there exists integer $s$ such that $n = d/(2s)$ or 
$n = d/(2s+1)$, respectively. Whenever $d/n$ is even, we 
consider Eq. \eqref{NewId1} for $x = n = d/(2s)$, which reads
\begin{eqnarray}\label{traces22}
\Tr(B_1^{n})=\omega^{d/2}\,\Tr\left(B_1^{d+n}B_2^{-d}\right).
\end{eqnarray}
Using then the facts that $B_i^d=\I$ and $\omega^{d/2}=-1$, the above relation simplifies to,
\be 
\Tr(B_1^{n})= (-1) \Tr\left(B_1^{n}\right).
\ee 
Thus, for any $n$ such that $d/n$ is even, $\Tr(B_1^{n})=0$. 
Similarly, by taking $y=d/(2s)$ in Eq. (\ref{NewId2}) one can conclude the same for $B_2$, i.e., $\Tr(B_2^{n})=0$. 

Now, for any divisor $n$ of $d$ for which $d/n$ is odd, we choose $x = n = d/(2s+1)$ in Eq. \eqref{NewId1}, 
which leads us to
\be \label{even-1}
\Tr(B_1^{n})= (-1) \w^{-n/2}\, \Tr\left(B_2^{n}\right).
\ee
The above relation together with Eq. \eqref{traces} imply $\Tr(B_i^{n})=0$ for any $n$ such that $d/n$ is odd 
and $n\leq d/2$. Thus, we have shown that for any $n$ which is a divisor of $d$ and $n\neq d$, $\Tr(B_i^{n})=0$ for $i=1,2$.

\textit{\textbf{Odd} $d$.}
Since $d$ is odd, there exists an integer $s$ such that $2s+1=d$. Plugging $s=(d-1)/2$ in Eq. \eqref{NewId1} we get
\begin{eqnarray}\label{traces21}
\Tr(B_1^x)=\omega^{(d-1)x/2}\,\Tr\left(B_1^{dx}B_2^{-(d-1)x}\right),
\end{eqnarray}
which due to the facts that $B_2^d=\I$ and $\omega^{d/2}=-1$ simplifies to
\begin{eqnarray}\label{traces2}
\Tr(B_1^x)=(-1)^x\omega^{-x/2}\,\Tr\left(B_2^{x}\right).
\end{eqnarray}
Now, for any odd $x$ such that $x\leq \lfloor d/2\rfloor$ we obtain from Eqs. \eqref{traces} and \eqref{traces2} that $\Tr(B_1^x)=\Tr(B_2^x)=0$. 
Since all the divisors of an odd $d$ (except $d$) are odd and less than $\lfloor d/2\rfloor$, we conclude \eqref{traceB}.
\end{proof}

\begin{lem} \label{le:pol}
Consider a real polynomial 
\begin{equation}\label{WX}
    W(x)=\sum_{i=0}^{d-1}\lambda_ix^i
\end{equation}
with rational coefficients $\lambda_i \in \Q$. Assume that $\omega^n$ with $\w=\mathrm{e}^{2\pi \mathbbm{i}/d}$ 
is a root of $W(x)$ for any $n$ being a proper divisor of $d$, i.e., $n\neq d$ such that 
$d/n\in \N$. Then, $\lambda_0=\lambda_1=\ldots=\lambda_{d-1}$.
\end{lem}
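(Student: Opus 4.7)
The plan is to use the theory of cyclotomic polynomials. Since the coefficients $\lambda_i$ are rational, we can bring in the fact that the minimal polynomial of an algebraic number over $\mathbb{Q}$ must divide any rational polynomial having that number as a root.

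First I would note that for each proper divisor $n$ of $d$, the number $\omega^n$ is a primitive $(d/n)$-th root of unity; therefore its minimal polynomial over $\mathbb{Q}$ is the cyclotomic polynomial $\Phi_{d/n}(x)$, which has integer coefficients. Because $W(x)\in\mathbb{Q}[x]$ vanishes at $\omega^n$, we get $\Phi_{d/n}(x)\mid W(x)$ in $\mathbb{Q}[x]$. Running this argument over every proper divisor $n$ of $d$ and using that distinct cyclotomic polynomials are pairwise coprime, we conclude that the product
\begin{equation}
\prod_{\substack{m\mid d\\ m>1}} \Phi_m(x)
\end{equation}
divides $W(x)$, where $m=d/n$ ranges over all divisors of $d$ strictly greater than $1$ as $n$ ranges over proper divisors of $d$.

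Next I would invoke the standard identity $x^d-1=\prod_{m\mid d}\Phi_m(x)$. Separating out the factor $\Phi_1(x)=x-1$, this yields
\begin{equation}
\prod_{\substack{m\mid d\\ m>1}}\Phi_m(x) \;=\; \frac{x^d-1}{x-1} \;=\; 1+x+x^2+\cdots+x^{d-1}.
\end{equation}
Hence $1+x+\cdots+x^{d-1}$ divides $W(x)$.

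Finally, since $\deg W\le d-1$ and $W$ is divisible by the degree-$(d-1)$ polynomial $1+x+\cdots+x^{d-1}$, the quotient must be a constant $c$, giving $W(x)=c\,(1+x+\cdots+x^{d-1})$ and therefore $\lambda_0=\lambda_1=\cdots=\lambda_{d-1}=c$, as desired. There is no real obstacle here once one recognises that the hypothesis is precisely tailored so that the roots $\{\omega^n:n\text{ proper divisor of }d\}$ together exhaust all non-trivial $d$-th roots of unity (each primitive $m$-th root of unity for $m\mid d$, $m>1$, appears as a Galois conjugate of some $\omega^n$); the only subtlety to flag is the use of rationality of the $\lambda_i$, without which the divisibility by the full cyclotomic factors would fail.
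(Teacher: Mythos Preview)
Your proof is correct and follows essentially the same route as the paper: both arguments identify each $\Phi_{d/n}$ as the minimal polynomial of $\omega^n$ over $\mathbb{Q}$ and hence a divisor of $W$, then use the identity $\prod_{m\mid d,\,m>1}\Phi_m(x)=(x^d-1)/(x-1)$ together with the degree bound $\deg W\le d-1$ to force $W$ to be a constant multiple of $1+x+\cdots+x^{d-1}$. The only cosmetic difference is that the paper phrases the divisibility step via an iterated Euclidean division rather than invoking minimality and pairwise coprimality of the cyclotomic factors directly.
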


To be able to prove this lemma we need to introduce the following two relevant concepts. \\

\noindent {\bf Euclidean division algorithm.} Let $F(x), G(x) \in \mathbb{F}[x]$ be two nonzero polynomials over the field 
$\mathbb{F}$ such that $\deg G(x)\leq \deg F(x)$. Then there exist two polynomials $Q(x), R(x) \in \mathbb{F}[x]$ such that $F(x) = Q(x) G(x)+R(x)$ and $\deg R(x) < \deg G(x)$.

\

\noindent\textbf{Cyclotomic polynomials.} Let $n$ be any positive integer. The $n$th cyclotomic polynomial $\phi_n(x)$ is defined as 
the unique irreducible polynomial over the field of rational numbers whose root is the primitive root of unity $\mathrm{exp}(2\pi\mathbbm{i}/n)$, and is given explicitly by the following formula
\begin{eqnarray}
\phi_{n}(x)=\prod_{\substack{1\leq k\leq n\\\mathrm{gcd}(k,n)=1}}\left(x-e^{\frac{2\pi \mathbbm{i}k}{n}}\right).
\end{eqnarray}
Let us then consider a positive integer $d$ and denote by $n_1<n_2<\ldots<n_k$
its proper divisors with $n_1=1$ and $n_k<d$. Then, it is known that the product of the cyclotomic polynomials $\phi_{d/n_i}$ over all these proper divisors can be 
represented by the following simple formula
\begin{equation}\label{FundamentalCyclotomic0}
    \prod_{i=1}^{k}\phi_{d/n_i}(x)=\frac{x^d-1}{x-1}\equiv \sum_{i=0}^{d-1}x^i.
\end{equation}
Let us finally notice for further purposes that $\omega^{n_i}$ for some
divisor $n_i$ is a root of $\phi_{d/n_i}(x)$ but not of $\phi_{d/n_j}(x)$
for any other divisor $n_j\neq n_i$ of $d$.

\begin{proof}Let $n_1< n_2<n_3<\ldots<n_k<d$ with $n_1=1$ and $n_k<d$ be the proper divisors of $d$. From the assumption we know that $\omega^{n_i}$ is a root of $W(x)$, 
that is 
\begin{equation}\label{Roots}
    W(\omega^{n_i})=0,\qquad i=1,\ldots,k.
\end{equation}
We now use this last condition along with the Euclidean division algorithm 
to prove that the polynomial $W(x)$ must be a product of cyclotomic polynomials $\phi_{d/n_i}(x)$. We do it in a recursive way 

Let us begin with $n_1=1$. The Euclidean division algorithm implies that 
\begin{equation}
W(x)=Q_1(x)\phi_{d}(x)+R_1(x)    
\end{equation}
with $Q_1(x),R_1(x)\in\mathbb{Q}[x]$ such that $\deg R_1(x)<\deg \phi_d(x)$.
From Eq. (\ref{Roots}) we know that $W(\omega)=0$, implying that $R_1(\omega)=0$. 
As $\phi_d(x)$ is the unique polynomial of minimal degree over the rational field whose root is $\omega$, the latter implies that $R_1(x)\equiv 0$. Thus, 
\begin{equation}
    W(x)=Q_1(x)\phi_d(x).
\end{equation}

Let us then move on to the next divisor of $d$, $n_2>n_1$. Due to the fact that $\omega^{n_2}$ is not a root of $\phi_{d}(x)$, the condition (\ref{Roots}) implies that $Q_{1}(\omega^{n_2})=0$. We therefore apply the Euclidean division algorithm to $Q_1(x)$, that is,
\begin{equation}
    Q_1(x)=Q_{2}(x)\phi_{d/n_2}(x)+R_{2}(x)
\end{equation}
with $Q_{2}(x),R_{2}(x)\in\mathbb{Q}[x]$ such that $\deg R_{2}<\deg \phi_{d/n_2}(x)$.
Due to the fact that $\omega^{n_2}$ is a root of $Q_1$ we see that $R_{2}(\omega^{n_2})=0$. 
Again, due to the fact that $\phi_{d/n_2}(x)$ is the unique polynomial of minimal degree over the rational field whose root is $\omega^{n_2}$, the latter implies $R_{2}(x)\equiv 0$. Consequently, 
\begin{equation}
    Q_1(x)=Q_2(x)\phi_{d/n_2}(x).
\end{equation}

Applying the above procedure iteratively for every divisor $n_i$ of $d$ it is now
not difficult to see that our initial polynomial $W(x)$ is of the following form 
\begin{equation}\label{WX2}
    W(x)=\widetilde{Q}(x)\prod_{i=1}^{k}\phi_{d/n_i}(x),
\end{equation}
with $\widetilde{Q}(x)$ is a rational polynomial. Owing to the fundamental relation satisfied by the cyclotomic polynomials given in Eq. (\ref{FundamentalCyclotomic0})
%
%
one sees that the degree of the above product of cyclotomic polynomials is $d-1$ 
and equals the degree of $W(x)$, which directly implies that $\widetilde{Q}(x)$ is 
simply a constant factor, denoted $\widetilde{Q}$. Moreover, by comparing Eqs. (\ref{WX})
and (\ref{WX2}) and taking into account Eq. (\ref{FundamentalCyclotomic0}), we immediately 
see that $\lambda_i=\widetilde{Q}$ for $i=0,\ldots,d-1$, meaning that 
$\lambda_0=\lambda_1=\ldots=\lambda_{d-1}$, which completes the proof.
\end{proof}


\begin{lem}\label{le:Fij}
Consider two unitary operators $B_1$ and $B_2$ acting on $\C^d\otimes \mathcal{H}_{B'}$ such that $B_i^d=\I$ with $i=1,2$. Assume moreover that $B_1=Z_d \otimes \I_{B'} $
and that they both satisfy the conditions (\ref{Ccond1}).
%
Then, then there exists a unitary $U$ such that $UB_1 U^\dagger = B_1$ and 
$UB_2U^{\dagger}=T_d\otimes\mathbbm{1}_{B'}$ with $T_d$ defined in (\ref{ZdTd}).
%
\end{lem}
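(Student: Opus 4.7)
The plan is to reduce the lemma to an analysis of the block structure of $B_2$ in the eigenbasis of $B_1 = Z_d \otimes \I_{B'}$. Writing $B_2 = \sum_{i,j=0}^{d-1} |i\rangle\langle j| \otimes F_{ij}$ with $F_{ij}$ acting on $\mathcal{H}_{B'}$, any unitary $U$ commuting with $B_1$ is necessarily block-diagonal, $U = \sum_j |j\rangle\langle j| \otimes U_j$. The lemma therefore reduces to showing that there exist unitaries $U_j$ on $\mathcal{H}_{B'}$ such that $U_i F_{ij} U_j^{\dagger} = (T_d)_{ij}\, \I_{B'}$ for every $i,j$, where $(T_d)_{ij}$ are the matrix entries of $T_d$ read off from \eqref{ZdTd}.

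The first step is to translate the hypotheses into explicit relations on the blocks $F_{ij}$. From the commutation identity $B_1^k B_2^{-k} = \omega^{-k} B_2^k B_1^{-k}$ derived in the proof of Lemma \ref{le:traceless} from \eqref{Ccond2}, the case $k=1$ gives the phase symmetry $F_{ij} = \omega^{i+j+1} F_{ji}^{\dagger}$. Expanding next $C_1^{(2)} = [C_1^{(1)}]^2$ from \eqref{Ccond1}, using the explicit form of $C_1^{(k)}$ in \eqref{Coperators} and substituting $a_1 = (1-\mathbbm{i})\omega^{1/4}/2$, a short calculation collapses everything into the single quadratic operator identity $(\omega^{-1/4} B_1^{-1} - \omega^{1/4} B_2^{-1})^2 = 0$. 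Projecting this identity onto blocks and combining with the unitarity of $B_2$, which in block form reads $\sum_k F_{ik} F_{i'k}^{\dagger} = \delta_{ii'}\,\I_{B'}$, yields enough constraints to force every $F_{ij}$ to factorize as $F_{ij} = t_{ij}\, V_{ij}$, with $V_{ij}$ unitaries on $\mathcal{H}_{B'}$ and $t_{ij}$ scalars whose moduli are determined by the unitarity.

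The second step is to extract a cocycle condition $V_{ij} V_{jk} = V_{ik}$ by iterating the higher-order relations $C_1^{(k)} = [C_1^{(1)}]^k$ for $k \geq 3$, equivalently by imposing $B_2^d = \I$ block by block. Once the cocycle is in place a standard gauge-fixing argument closes the proof: set $U_0 = \I_{B'}$ and $U_j := V_{0j}^{\dagger}$; the block-diagonal unitary $U = \sum_j |j\rangle\langle j| \otimes U_j$ then commutes with $B_1$ and satisfies $U B_2 U^{\dagger} = \sum_{ij} t_{ij}\, |i\rangle\langle j| \otimes \I_{B'}$, which is the required tensor-product form.

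The main obstacle I anticipate is the final identification: verifying that the scalars $t_{ij}$ obtained from the recursion $C_1^{(k)} = [C_1^{(1)}]^k$ coincide with the matrix entries $(T_d)_{ij}$ in \eqref{ZdTd}, and in particular that they reproduce the rank-one correction carrying the $(-1)^{\delta_{i,0}+\delta_{j,0}}$ signs. This is essentially a phase-bookkeeping computation in which the coefficients $a_k$ from \eqref{Coperators} and the powers of $\omega$ have to be tracked through the iteration, and it is this step that singles out $T_d$ among the many unitaries with spectrum $\{\omega^l\}$ that are otherwise compatible with the relations imposed on $B_2$.
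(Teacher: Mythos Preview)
Your block-decomposition setup and the block-diagonal gauge $U=\sum_j|j\rangle\langle j|\otimes U_j$ match the paper's framework exactly, and the gauge choice $U_0=\I_{B'}$, $U_j\propto F_{0j}$ is also what the paper does. However, two substantive gaps remain. First, the identity $B_1^kB_2^{-k}=\omega^{-k}B_2^kB_1^{-k}$ you invoke comes from \eqref{Ccond2}, which is \emph{not} a hypothesis of Lemma~\ref{le:Fij}; only \eqref{Ccond1} is assumed there. The phase symmetry $F_{ij}=\omega^{i+j+1}F_{ji}^{\dagger}$ does hold, but the paper obtains it purely from \eqref{Ccond1} via the recursion
\[
B_2^{k}=-(k-1)\omega^{k/2}B_1^{k}+\omega^{(k-1)/2}\sum_{m=0}^{k-1}B_1^{m}B_2B_1^{k-1-m},
\]
evaluated at $k=d-1$. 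Second, and more seriously, the assertion that $N^{2}=0$ (your nilpotent reformulation of $C_i^{(2)}=[C_i^{(1)}]^{2}$) together with unitarity ``yields enough constraints to force every $F_{ij}$ to factorize as $t_{ij}V_{ij}$'' is not justified and is very unlikely to be true on its own: it is a single quadratic relation, whereas the paper needs the full tower $k=1,\ldots,d-1$ of \eqref{Ccond1}. Concretely, the paper first pins down the diagonal blocks \emph{exactly} as $F_{ii}=\tfrac{d-2}{d}\,\omega^{i+1/2}\I_{B'}$ (not merely ``scalar times unitary''), then feeds this back into the block form of the recursion above to obtain, after a Fourier inversion over $k$, the crucial identity $F_{ij}F_{ij}^{\dagger}=\tfrac{4}{d^{2}}\I_{B'}$ for $i\neq j$.

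The cocycle route you propose is conceptually attractive, but you give no argument for $V_{ij}V_{jk}=V_{ik}$; the paper bypasses this entirely. After the gauge-fixing that sets $\tilde F_{0j}=\tfrac{2}{d}\omega^{(j+1)/2}\I_{B'}$, it reads off the $|0\rangle\langle j|$ blocks of the recursion to obtain a \emph{linear} system in the remaining $\tilde F_{ij}$ (with $i,j\neq 0$), which is again solved by Fourier inversion in $k$ and yields $\tilde F_{ij}=-\tfrac{2}{d}\omega^{(i+j+1)/2}\I_{B'}$ directly. Thus the ``main obstacle'' you anticipate---matching the scalars to $(T_d)_{ij}$---is in the paper not a separate bookkeeping step at the end, but falls out automatically once the linear system is solved. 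What you are missing is precisely this mechanism: the recursion over all $k$ and the two Fourier inversions that convert it into closed formulas for the blocks.
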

\begin{proof}
We begin by proving the following relation for $B_1$ and $B_2$ matrices:
\begin{equation}\label{FijEq1}
B_2^k=-(k-1)\omega^{\frac{k}{2}}B_1^k+\omega^{\frac{k-1}{2}}\sum_{m=0}^{k-1}B_1^mB_2B_1^{k-1-m}, \qquad k=1,\ldots,d.
\end{equation}
We prove this relation by induction. For $k=1$ it is not difficult to see that both its sides equal $B_2$. Assuming then that (\ref{FijEq1}) holds true we will prove it for $k+1$. To this end, let us rewrite \eqref{Ccond1} as 
\begin{equation}
 \left[C_2^{(k+1)}\right]^\dagger=\left[C_2^{(1)}C_2^{(k)}\right]^\dagger   
\end{equation}
for $k=1,\dots,d-1$, which in terms of $B_i$ [cf. Eq. \eqref{Coperators}] can be rewritten as
%
\begin{eqnarray}
   B_2^{k+1}=-\omega^{\frac{k+1}{2}}B_1^{k+1}+\omega^{\frac{k}{2}}B_1^kB_2+\omega^{\frac{1}{2}}B_2^kB_1 .
\end{eqnarray}
Now, substituting $B_2^k$ from \eqref{FijEq1}, we have
\begin{eqnarray}
   B_2^{k+1} &= &-\omega^{\frac{k+1}{2}}B_1^{k+1}+\omega^{\frac{k}{2}}B_1^kB_2+\omega^{\frac{1}{2}}\left[-(k-1)\omega^{\frac{k}{2}}B_1^k+\omega^{\frac{k-1}{2}}\sum_{m=0}^{k-1}B_1^mB_2B_1^{k-1-m}\right]B_1\nonumber\\
 &= &-k\omega^{\frac{k+1}{2}}B_1^{k+1}+\omega^{\frac{k}{2}}\sum_{m=0}^{k}B_1^mB_2B_1^{k-m}.
\end{eqnarray}

Having established (\ref{FijEq1}), let us now write $B_2$ as
\be  \label{B2form}
B_2=\sum_{i,j=0}^{d-1}\ket{i}\!\bra{j}\otimes F_{ij},
\ee  
where $F_{ij}$ are some matrices acting on $\mathcal{H}_{B'}$. Our aim now is to determine $F_{ij}$ using the conditions (\ref{FijEq1}). We first focus on $F_{ii}$ and then move on to $F_{ij}$ with $i\neq j$.

\textit{\textbf{Finding} $F_{ii}$.} Taking the relation \eqref{FijEq1} for $k=d-1$, we obtain %
\begin{eqnarray}\label{FijEq3}
    B_2^\dagger=-(d-2)\omega^{\frac{d-1}{2}}B_1^{d-1}+\omega^{\frac{d-2}{2}}\sum_{m=0}^{d-2}B_1^mB_2B_1^{d-2-m},
\end{eqnarray}
which after taking into account that $B_1=Z_d\otimes \I_{B'}$ and that $B_2$ is given by Eq. (\ref{B2form}), can be rewritten as
\begin{equation}\label{FijEq9}
 \sum\limits_{i,j=0}^{d-1}\ket{j}\!\bra{i}\otimes F_{ij}^\dagger =
-(d-2)\omega^{\frac{d-1}{2}}\sum_{i=0}^{d-1}\omega^{(d-1)i}\ket{i}\!\bra{i}\otimes \I_{B'} + \omega^{\frac{d-2}{2}} \sum_{i,j=0}^{d-1}\sum_{m=0}^{d-2}\omega^{(d-2)j+m(i-j)}\ket{i}\!\bra{j}\otimes F_{ij}.
\end{equation}
By projecting the first subsystem onto $|i\rangle\!\langle i|$, we arrive at the following 
condition for $F_{ii}$ matrices
\begin{eqnarray}\label{FijEq8}
  F_{ii}^\dagger =-(d-2)\omega^{\frac{d-1}{2}} \omega^{-i}\I_{B'} +(d-1)\omega^{\frac{d-2}{2}} \omega^{-2i}F_{ii}.
\end{eqnarray}
Conjugating the above equation on both sides we obtain
\begin{eqnarray}\label{FijEq85}
     F_{ii}=-(d-2)\omega^{-\frac{d-1}{2}}\omega^{i}\ \I_{B'} +(d-1)\omega^{-\frac{d-2}{2}}\omega^{2i} F_{ii}^\dagger .
\end{eqnarray}
After plugging into the above relation $F_{ii}^\dagger$ from Eq. \eqref{FijEq8} 
we obtain an equation for $F_{ii}$ whose solution is
%
%
%
\begin{equation}\label{TintoDeVerano}
    F_{ii}=\frac{d-2}{d}\omega^{i+\frac{1}{2}} \I_{B'}.
\end{equation}

\textit{\textbf{Finding} $F_{ij}$.} Let us now move on to determining the $F_{ij}$ matrices for $i\neq j$. We formulate our derivation as a sequence of observations. 

First, by comparing the matrices appearing on $|i\rangle\!\langle j|$ position of both sides of \eqref{FijEq9} with $i\neq j$, we obtain the following equation
\begin{eqnarray}\label{fijEq12}
F_{ji}^\dagger=\omega^{\frac{d-2}{2}}\omega^{-2j}\sum_{m=0}^{d-2}\omega^{m(i-j)} F_{ij},
\end{eqnarray}
which after taking into account the fact that for $i\neq j$,
\begin{equation}
    \sum_{m=0}^{d-2}\omega^{m(i-j)}=-\omega^{-(i-j)},
\end{equation}
reduces to 
\be \label{fijEq12}
F_{ij}=\omega^{i+j+1}F_{ji}^\dagger.
\ee 
Thus, \eqref{FijEq3} only provides a relation between the symmetric elements of $B_2$ in the form \eqref{fijEq12}. To find the explicit form of $F_{ij}$, we have to look for equations involving higher order terms in $F_{ij}$. To this aim, let us prove the following observation.

\begin{obs}\label{obs3.1}
The following conditions hold true
 \begin{eqnarray} \label{FijObs3}
  -(k-1)\omega^{\frac{k}{2}}\sum_{i,j=0}^{d-1}\omega^{ki}\ket{i}\!\bra{j}\otimes F_{ij}+\omega^{\frac{k-1}{2}}\sum_{i,j=0}^{d-1}\ket{i}\!\bra{j}\otimes\left[\sum_{\substack{l=0\\l\ne i}}^{d-1}\left(\frac{\omega^{ki}-\omega^{kl}}{\omega^{i}-\omega^{l}}\right) F_{il}F_{lj}+k\omega^{(k-1)i}F_{ii}F_{ij}\right] \nonumber\\
  = -k\omega^{\frac{k+1}{2}}\sum_{i=0}^{d-1}\omega^{(k+1)i}\ket{i}\!\bra{i}\otimes \I_{B'}+\omega^{\frac{k}{2}}\sum_{i,j=0}^{d-1}\left(\sum_{m=0}^{k}\omega^{k j+m(i-j)}\ket{i}\!\bra{j}\otimes F_{ij}\right), \qquad k=1,\ldots,d-1.
\end{eqnarray}
\end{obs}
From \eqref{FijEq1} we know that 
\begin{eqnarray}\label{FijEq11}
     B_2^{k+1}=-k\omega^{\frac{k+1}{2}}B_1^{k+1}+\omega^{\frac{k}{2}}\sum_{m=0}^{k}B_1^mB_2B_1^{k-m}.
\end{eqnarray}
Again using \eqref{FijEq1} for $B_2^k$, the left-hand side of the above equation can be written as
\begin{equation} \label{FijEq11a}
B_2^{k+1}=B_2^kB_2=-(k-1)\omega^{\frac{k}{2}}B_1^kB_2+\omega^{\frac{k-1}{2}}\sum_{m=0}^{k-1}B_1^mB_2B_1^{k-1-m}B_2.
\end{equation}
Let us now focus on the sum appearing in Eq. (\ref{FijEq11a}). After substituting the explicit forms of $B_1$ and $B_2$ into it, we can rewrite it as
\begin{equation}
\sum_{m=0}^{k-1}B_1^mB_2B_1^{k-1-m}B_2 
= \sum_{i,j,l=0}^{d-1}\omega^{(k-1)l}\sum_{m=0}^{k-1}\omega^{m(i-l)}\ket{i}\!\bra{j}\otimes F_{il}F_{lj}.
\end{equation}
After splitting the above summations into the cases of $i=l$ and $i\neq l$, we obtain
\begin{eqnarray} \label{eq83}
\sum_{m=0}^{k-1}B_1^mB_2B_1^{k-1-m}B_2 & =&\sum_{i,j=0}^{d-1} \ket{i}\!\bra{j} \otimes \left[k\omega^{(k-1)i} F_{ii}F_{ij}+\sum_{\substack{l=0\\l\ne i}}^{d-1}\left(\omega^{(k-1)l}\sum_{m=0}^{k-1}\omega^{m(i-l)}F_{il}F_{lj}\right)\right]
\nonumber \\
& =& \sum_{i,j=0}^{d-1}\ket{i}\!\bra{j}\otimes\left[\sum_{\substack{l=0\\l\ne i}}^{d-1}\left(\frac{\omega^{ki}-\omega^{kl}}{\omega^{i}-\omega^{l}}\right) F_{il}F_{lj}+k\omega^{(k-1)i}F_{ii}F_{ij}\right].
\end{eqnarray}

Let us then consider the sum appearing in Eq. (\ref{FijEq11}). 
After plugging the explicit forms of $B_1$ and $B_2$ into it we arrive at 
\begin{equation}  \label{eq84}
\sum_{m=0}^{k}B_1^mB_2B_1^{k-m} = \sum_{i,j=0}^{d-1}\omega^{kj}\sum_{m=0}^{k}\omega^{m(i-j)}\ket{i}\!\bra{j}\otimes F_{ij}.
\end{equation} 
We finally substitute $B_2^{k+1}$ from Eq. \eqref{FijEq11} into Eq. \eqref{FijEq11a}, and then plug the explicit forms of $B_1$ and $B_2$ as well as the expressions \eqref{eq83} and
\eqref{eq84} into the resulting equations, which leads us to \eqref{FijObs3}. This completes the proof of observation \ref{obs3.1}.

\ \\

Now, having the formula (\ref{FijObs3}), we proceed as before. That is, we first look at the diagonal elements of it and solve the resulting equations. Then, we consider the off-diagonal elements of Eq. (\ref{FijObs3}) to have full characterization of $F_{ij}$. 
As for the diagonal part of Eq. (\ref{FijObs3}) we establish the following observation.

\begin{obs} The following conditions hold true
\begin{equation} \label{FijFact2}
\sum_{\substack{j=0\\j\ne i}}^{d-1}
\left(\frac{1-\omega^{k(j-i)}}{1-\omega^{i-j}}\right)F_{ij}F_{ij}^\dagger=\frac{4k}{d^2}\I_{B'}, \qquad k=0,\ldots,d-1, \qquad i=0,\ldots,d-1.
\end{equation}
\end{obs}
\noindent To prove the above relation let us consider the $|i\rangle\!\langle i|$ elements of \eqref{FijObs3} which, after some algebra, gives us the following relation
\begin{eqnarray}
 \sum_{\substack{l=0\\l\ne i}}^{d-1}\left(\frac{\omega^{ki}-\omega^{kl}}{\omega^{i}-\omega^{l}}\right) F_{il}F_{li} =
 k\omega^{ki}\left[2\omega^{\frac{1}{2}}F_{ii}-\omega^{-i}F_{ii}^2-\omega^{i+1}\I_{B'}\right],
 %
\end{eqnarray}
which after substituting $F_{ii}$ from Eq. \eqref{TintoDeVerano} into it simplifies to 
\begin{equation}
 \sum_{\substack{l=0\\l\ne i}}^{d-1}\left(\frac{\omega^{ki}-\omega^{kl}}{\omega^{i}-\omega^{l}}\right) F_{il}F_{li} =-\frac{4k}{d^2}\omega^{i(k+1)+1}\I_{B'}.
 \end{equation}
%
%
This after some manipulations can further be rewritten as
\begin{equation}
\sum_{\substack{l=0\\l\ne i}}^{d-1}\left(\frac{1-\omega^{k(l-i)}}{1-\omega^{i-l}}\right)F_{il}F_{li}\omega^{-(i+l+1)}=\frac{4k}{d^2} \I_{B'}, 
\end{equation}
and after taking into account \eqref{fijEq12} and changing the index $l$ to $j$, we obtain the desired relation \eqref{FijFact2}.\\

%
The subsequent observation provides the solution of equations (\ref{FijFact2}).
\begin{obs}
The solution of \eqref{FijFact2} is given by
\begin{eqnarray}\label{FijEq25}
  F_{ij}F_{ij}^\dagger=\frac{4}{d^2} \I_{B'}, \qquad i \neq j.
\end{eqnarray}
\end{obs}
We multiply Eq. (\ref{FijFact2}) by $\omega^{kn}$ with $k=0,\ldots,d-1$ and
$n=1,\ldots,d-1$ and then sum the resulting identity over all $k'$s, which leads us to 
\begin{equation}\label{formulaTwo}
-\sum_{\substack{j=0\\j\ne i}}^{d-1}\frac{1}{1-\omega^{i-j}}F_{ij}F_{ij}^\dagger\sum_{k=0}^{d-1}\omega^{k(j-i+n)}=\frac{4}{d^2}\I_{B'} \sum_{k=0}^{d-1}k\omega^{kn},
\end{equation}
where we have exploited the fact that for any $n=1,\ldots,d-1$,
\begin{equation}\label{Porto}
    \sum_{k=0}^{d-1}\omega^{kn}=0.
\end{equation}
Applying then Eq. (\ref{FijIden1}) to the right-hand side of Eq. (\ref{formulaTwo}) 
and the fact that 
\begin{equation}
    \sum_{k=0}^{d-1}\omega^{k(j-i+n)}=\delta_{j,i-n \mod d}
\end{equation}
to its left-hand side we can bring it to 
\begin{eqnarray}
F_{i(i-n \mod d)}F_{i(i-n\mod d)}^\dagger=\frac{4}{d^2}\I_{B'}.
\end{eqnarray}
To complete the proof it suffices to realize that for any 
$i=0,\ldots,d-1$ there exist $n=1,\ldots,d-1$ such that $i-n\mod d$ is any number from $\{0,\ldots,d-1\}$ different than $i$.

\ \\

The relation \eqref{FijEq25} is unfortunately not sufficient to fully characterize $F_{ij}$. However, we can find a unitary operation $U$ that preserves $B_1$ and allows us to 
obtain an explicit form of a few matrices $F_{ij}$ from Eq. \eqref{FijEq25}. 

To be more precise, let us consider a unitary matrix $U$ acting on $\C^d \otimes \mathcal{H}_{B'}$ of the form
\begin{eqnarray}
U = \sum^{d-1}_{i,j=0} \ket{i}\!\bra{j} \otimes U_i, 
\end{eqnarray} 
where
\begin{equation}
    U_0=\I_{B'},\qquad U_i=\frac{d}{2}\omega^{-\frac{i+1}{2}}F_{0i}, \qquad i=1,\ldots,d-1.
\end{equation}
We can readily check that $UB_1 U^\dagger = B_1$. Let us then denote
\begin{eqnarray}
 UB_2U^\dagger = \sum^{d-1}_{i,j=0}\ket{i}\!\bra{j} \otimes U_iF_{ij}U_j^\dagger : = \sum^{d-1}_{i,j=0} \ket{i}\!\bra{j} \otimes \tilde{F}_{ij}.
\end{eqnarray}
Note that, all the algebraic relations for $F_{ij}$ obtained so far hold also for $\tilde{F}_{ij}$, and $\tilde{F}_{ii} = F_{ii}$. 

Now, we see that
\begin{eqnarray}\label{FijEq28}
  \tilde{F}_{0j}=U_0F_{0j}U_j^\dagger=\frac{d}{2}\omega^{\frac{j+1}{2}}F_{0j}F_{0j}^\dagger=\frac{2}{d}\omega^{\frac{j+1}{2}}\I_{B'},
\end{eqnarray} 
where to obtain the last equality we employed Eq. (\ref{FijEq25}).
Using Eq. (\ref{fijEq12}) we then obtain $\tilde{F}_{j0}$.

Thus, the remaining part of the proof of Lemma \ref{le:Fij} is to obtain the elements $F_{ij}$ such that $i,j \neq 0$ and  $i\neq j$. To do so, we use the off-diagonal elements of \eqref{FijFact2} to obtain a set of relations as follows.
\begin{obs}The following conditions hold true
\begin{equation} \label{FijFact3}
\sum_{\substack{i=1\\i\ne j}}^{d-1}\frac{1-\omega^{ki}}{1-\omega^{i}}\omega^{i/2} F_{ij}=\frac{2}{d}\omega^{\frac{j+1}{2}}\left(k+\frac{1-\omega^{kj}}{1-\omega^{j}}\omega^{j}\right), \qquad k=1,\ldots,d-1,\qquad j=1,\ldots,d-1.
\end{equation}
\end{obs}
Taking the inner product with $\bra{i} \ . \ |j\rangle$ (where $i\neq j$) on the both side of \eqref{FijObs3} we obtain
\begin{eqnarray}
 -(k-1)\omega^{ki}F_{ij}+\omega^{-\frac{1}{2}}\sum_{\substack{l=0\\l\ne i}}^{d-1}\left(\frac{\omega^{ki}-\omega^{kl}}{\omega^{i}-\omega^{l}}\right) F_{il}F_{lj}+k\omega^{(k-1)i}\omega^{-\frac{1}{2}}F_{ii}F_{ij}=\frac{\omega^{(k+1)i}-\omega^{(k+1)j}}{\omega^{i}-\omega^{j}}F_{ij}. 
\een
After rearranging some terms and using the formula for $F_{ii}$ given above, we express the above equation as
 \ben 
 \sum_{\substack{l=0\\l\ne i}}^{d-1}\frac{\omega^{ki}-\omega^{kl}}{\omega^{i}-\omega^{l}} F_{il}F_{lj}=\omega^{\frac{1}{2}}\left[\frac{\omega^{(k+1)i}-\omega^{(k+1)j}}{\omega^{i}-\omega^{j}}+\left(\frac{2k}{d}-1\right)\omega^{ki}\right]F_{ij}.
\end{eqnarray}
Next, we set $i=0$ and obtain
\begin{eqnarray}
\sum_{l=1}^{d-1}\frac{1-\omega^{kl}}{1-\omega^{l}} F_{0l}F_{lj}=\omega^{\frac{1}{2}}\left(\frac{1-\omega^{(k+1)j}}{1-\omega^{j}}+\frac{2k}{d}-1\right)F_{0j}
\end{eqnarray}
and then substitute $\tilde{F}_{0j}$ from \eqref{FijEq28} which gives
\begin{equation}\label{VihnoVerde}
    \sum_{l=1}^{d-1}\frac{1-\omega^{kl}}{1-\omega^{l}}\omega^{\frac{l+1}{2}} F_{lj}=\omega^{\frac{j}{2}+1}\left(\frac{1-\omega^{(k+1)j}}{1-\omega^{j}}+\frac{2k}{d}-1\right)\I_{B'}.
\end{equation}
Let us then take the term corresponding to $l=j$ out of the sum 
and use Eq. (\ref{TintoDeVerano}) to express $F_{jj}$. This after some simplifications
allows us to rewrite Eq. (\ref{VihnoVerde}) as
\begin{equation}
\sum_{\substack{l=1\\l\ne j}}^{d-1}\left(\frac{1-\omega^{kl}}{1-\omega^{l}}\right) \omega^{\frac{l}{2}}F_{lj}= 
\frac{2}{d}\omega^{\frac{j+1}{2}}\left(k+\frac{1-\omega^{kj}}{1-\omega^{j}}\omega^{j}\right) \I_{B'} .
\end{equation}
Finally, changing the index $l$ to $i$ leads us to \eqref{FijFact3}. Note that the equations derived for $F_{ij}$ are also valid for $\tilde{F}_{ij}$, and $\tilde{F}_{ii}=F_{ii}$.

The following observation provides the solution of \eqref{FijFact3}.
\begin{obs}
The solution of the equation \eqref{FijFact3} is
\begin{eqnarray}\label{FijEq34}
F_{ij}=-\frac{2}{d}\omega^{\frac{i+j+1}{2}}\I_{B'}, \qquad i, j=1,\ldots,d-1,\ \ \  i\ne j .
\end{eqnarray}
\end{obs} 
We multiply Eq. \eqref{FijFact3} by $\omega^{-kn}$ with $n\in\{1,\ldots,d-1\}$ such that $n\neq j$ and then 
sum both sides of the resulting formula over $k=0,\ldots,d-1$, obtaining
\begin{equation}
\sum_{\substack{i=1\\i\ne j}}^{d-1}\frac{\omega^{i/2}}{1-\omega^{i}}F_{ij}\sum_{k=0}^{d-1}\left(\omega^{-kn}-\omega^{k(i-n)}\right)=
\frac{2}{d}\omega^{\frac{j+1}{2}}\left[\sum_{k=0}^{d-1}k\omega^{-kn}+\frac{\omega^{j/2}}{1-\omega^{j}}
\sum_{k=0}^{d-1}\left(\omega^{-kn}-\omega^{k(j-n)}\right)\right]\I_{B'}.
\end{equation}
We now notice that the first sum on the left-hand side of the above 
and the last two sums on the right-hand side simply vanish due to Eq. 
(\ref{Porto}) and the fact that $n\neq j$. Exploiting then
Eq. (\ref{FijIden2}) as well as the fact that 
\begin{equation}
    \sum_{k=0}^{d-1}\omega^{k(n-i)}=d\delta_{n,i}
\end{equation}
holds true for any $x,y\in\{0,\ldots,d-1\}$, we obtain
%
%
\begin{equation}
-d\frac{\omega^{n/2}}{1-\omega^{n}}F_{nj}=\frac{2}{d}\omega^{\frac{j+1}{2}}\left(\frac{d}{\omega^{-n}-1}\right)\I_{B'},
\end{equation}
which after some manipulations leads us to (\ref{FijEq34}).
Finally, taking into account Eqs. (\ref{B2form}), \eqref{TintoDeVerano} and 
\eqref{FijEq34} we conclude that 
\be 
U B_2 U^\dagger = T_d \otimes \I_{B'}
\ee 
with $T_d$ given by Eq. (\ref{ZdTd}).
\end{proof}


\begin{fakt} The following identities hold 
\begin{eqnarray} \label{FijIden1}
  \sum_{\substack{j=0\\j\ne i}}^{d-1}\frac{1-\omega^{k(j-i)}}{1-\omega^{i-j}}=k, \qquad k=1,\ldots,d-1,\qquad i=0,\ldots,d-1,
\end{eqnarray}
and
\begin{eqnarray} \label{FijIden2} 
  \sum_{k=0}^{d-1}k\omega^{kn}=\frac{d}{\omega^n-1}, \qquad n=1,\ldots,d-1.
\end{eqnarray}
\end{fakt}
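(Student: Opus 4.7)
The plan is to prove the two identities separately, each by elementary manipulations of geometric sums of $d$-th roots of unity; neither requires any of the heavier Bell-inequality machinery developed earlier in the paper.

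For identity (\ref{FijIden1}), my approach is to reindex the sum. Setting $m = j - i \pmod d$, as $j$ runs over $\{0,\ldots,d-1\}\setminus\{i\}$ the index $m$ runs over $\{1,\ldots,d-1\}$, and the identity reduces (independently of $i$) to showing $\sum_{m=1}^{d-1}(1-\omega^{km})/(1-\omega^{-m})=k$. Multiplying numerator and denominator by $\omega^m$ and then factoring gives
\[
\frac{1-\omega^{km}}{1-\omega^{-m}}=\frac{\omega^m-\omega^{(k+1)m}}{\omega^m-1}=-\omega^m\,\frac{1-\omega^{km}}{1-\omega^m}=-(\omega^m+\omega^{2m}+\cdots+\omega^{km}),
\]
where the last step uses the finite geometric series. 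Swapping the order of summation, the claim becomes $-\sum_{l=1}^{k}\sum_{m=1}^{d-1}\omega^{lm}=k$. Since $1\le l\le k\le d-1$, we have $\omega^l\neq 1$, so the full sum $\sum_{m=0}^{d-1}\omega^{lm}$ vanishes, giving $\sum_{m=1}^{d-1}\omega^{lm}=-1$. Substituting yields $-\sum_{l=1}^{k}(-1)=k$, as required.

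For identity (\ref{FijIden2}), I would use the standard differentiation trick for geometric sums. Starting from $f(x)=\sum_{k=0}^{d-1}x^k=(x^d-1)/(x-1)$ and computing $xf'(x)=\sum_{k=0}^{d-1}kx^k$ gives the closed form
\[
\sum_{k=0}^{d-1}kx^k=\frac{d\,x^d(x-1)-x(x^d-1)}{(x-1)^2}.
\]
Now specializing to $x=\omega^n$ with $1\le n\le d-1$, we have $\omega^n\neq 1$ and $(\omega^n)^d=\omega^{nd}=1$, so $x^d-1=0$; the second term in the numerator vanishes and the first collapses to $d(x-1)$. One factor of $(x-1)$ cancels, yielding $\sum_{k=0}^{d-1}k\omega^{kn}=d/(\omega^n-1)$.

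Both arguments are short and the only potential pitfall is bookkeeping: for (\ref{FijIden1}) one must be careful that the reindexed sum is independent of $i$ (which it is, since the summand depends only on $j-i \pmod d$) and that the range of $l$ never includes a multiple of $d$; for (\ref{FijIden2}) one must verify that dividing by $x-1$ is legitimate, which follows from the hypothesis $n\neq 0\pmod d$. There is no substantive obstacle beyond these checks.
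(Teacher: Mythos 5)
Your proof is correct and follows essentially the same route as the paper: for the first identity you expand the fraction into a finite geometric sum and use the vanishing of $\sum_m\omega^{lm}$ over a full period (the paper does this without your $m=j-i$ reindexing, which is only a cosmetic difference), and for the second you use the same differentiation-of-the-geometric-series trick followed by evaluation at $x=\omega^n$ with $\omega^{nd}=1$. No gaps.
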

\begin{proof}
To prove the first relation (\ref{FijIden1}) we first notice that its left-hand side
can be rewritten as
\begin{equation}
      \sum_{\substack{j=0\\j\ne i}}^{d-1}\frac{1-\omega^{k(j-i)}}{1-\omega^{i-j}}=-\sum_{\substack{j=0\\j\ne i}}^{d-1}\omega^{j-i}
      \frac{1-\omega^{k(j-i)}}{1-\omega^{j-i}}.
\end{equation}
Observing then that the expression appearing on the right-hand side under the sum is 
actually a sum of a geometric series, one can rewrite the above equation as
\begin{equation}
      \sum_{\substack{j=0\\j\ne i}}^{d-1}\frac{1-\omega^{k(j-i)}}{1-\omega^{i-j}}=-\sum_{\substack{j=0\\j\ne i}}^{d-1}\left(\omega^{j-i}+\omega^{2(j-i)}+\ldots+ \omega^{k(j-i)}\right)=-\sum_{n=1}^k\omega^{-ni}\sum_{\substack{j=0\\j\ne i}}^{d-1}\omega^{nj}.
\end{equation}
Let us finally notice that for any $n=1,\ldots,d-1$ one has the following
chain of equalities
\begin{equation}\label{formulaOne}
  \sum_{\substack{j=0\\j\ne i}}^{d-1}\omega^{nj}=\sum_{j=0}^{d-1}\omega^{nj}-\omega^{ni}=-\omega^{ni}.
\end{equation}
After plugging this last formula into Eq. (\ref{formulaOne}) we arrive
at Eq. (\ref{FijIden1}). 

To show the second identity we use the following relation
\begin{eqnarray}
 \sum_{k=0}^{d-1}x^k=\frac{1-x^d}{1-x} .
\end{eqnarray}
Taking the derivative and multiplying by $x$ on both sides, we get 
\begin{eqnarray}
\sum_{k=0}^{d-1}kx^k=x\frac{\mathrm{d}}{\mathrm{d} x}\left(\frac{1-x^d}{1-x}\right)=\frac{-dx^{d}}{1-x}+\frac{x(1-x^d)}{(1-x)^2} \ .
\end{eqnarray}
Substituting $x=\omega^n$ and using the fact that $\omega^d=1$ we obtain \eqref{FijIden2}.
\end{proof}


\section{Unitary equivalence to CGLMP measurements}\label{sec:cglmp}

Let us first define the $d$-dimensional CGLMP measurements as
\begin{equation}
 A'_k=\sum^{d-1}_{r=0} \w^r \ket{r}\!\bra{r}_{A_k},\qquad
 B'_k = \sum^{d-1}_{r=0} \w^r \ket{r}\!\bra{r}_{B_k} 
\end{equation}
with $k=1,2$, where the eigenvectors are defined as
\begin{eqnarray}
 \ket{r}_{A_k}=\frac{1}{\sqrt{d}}\sum_{q=0}^{d-1} \omega^{(r-\alpha_k)q}\ket{q}, \qquad
 \ket{r}_{B_k}=\frac{1}{\sqrt{d}}\sum_{q=0}^{d-1} \omega^{-(r-\beta_k)q}\ket{q}
 \end{eqnarray}
 where $\alpha_k=(k-1/2)/2$ and $\beta_k=k/2$ \cite{BKP}.

\begin{fakt}\label{fact:cglmp}
There exist unitary operators $W_1,W_2:\mathbbm{C}^d\rightarrow \mathbbm{C}^d$ that transform $Z_d,T_d$ \eqref{ZdTd} to the $d$-dimensional CGLMP measurements in the following way: $A_1'=W_1Z_dW_1^{\dagger}, A_2'=W_1T_dW_1^{\dagger}$ and $B_1'=W_2 Z_d W_2^{\dagger},B_2'=W_2 T_d W_2^{\dagger}$.
\end{fakt}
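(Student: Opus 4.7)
The plan is to deduce the fact directly from the machinery developed in the proof of the main theorem, rather than carrying out an explicit change-of-basis computation. The key observation is that, by Ref.~\cite{SATWAP}, the CGLMP measurements $A_k', B_k'$ together with the maximally entangled state $\ket{\phi_d^+}$ constitute a realization achieving the maximal quantum value $\beta_Q^d = 2(d-1)$ of the SATWAP inequality. Consequently, every algebraic consequence derived in the proof of the main theorem from the SOS decompositions \eqref{SOSR1} and \eqref{SOSA} applies verbatim to this realization, with the auxiliary Hilbert spaces $\mathcal{H}_{A'}$ and $\mathcal{H}_{B'}$ both one-dimensional.

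On Bob's side, I first diagonalize $B_1'$: since it is a unitary on $\C^d$ with eigenvalues $\omega^r$ and $d$ orthonormal eigenvectors $\ket{r}_{B_1}$, there is a unitary $V_B$ on $\C^d$ with $V_B B_1' V_B^\dagger = Z_d$. The transformed pair $(Z_d, V_B B_2' V_B^\dagger)$ inherits conditions \eqref{Ccond1} and \eqref{Ccond2}, so Lemma \ref{le:Fij}, applied with trivial $\mathcal{H}_{B'}$, produces a further unitary $\overline{V}_B$ on $\C^d$ preserving $Z_d$ and mapping $V_B B_2' V_B^\dagger$ to $T_d$. Setting $W_2 := (\overline{V}_B V_B)^\dagger$ then gives both $B_1' = W_2 Z_d W_2^\dagger$ and $B_2' = W_2 T_d W_2^\dagger$.

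For Alice, I invoke the symmetric SOS decomposition \eqref{SOSA}: the operators $\overline{C}_i^{(k)}$ built from Alice's observables in \eqref{cit} obey exactly the same algebraic relations as Bob's $C_i^{(k)}$. Repeating the above argument on Alice's side, and stopping just before the optional conjugation by $W_A$ from \eqref{dupa1}--\eqref{dupa2}, yields a unitary $V_A$ on $\C^d$ with $V_A A_1' V_A^\dagger = Z_d$ and $V_A A_2' V_A^\dagger = T_d$. Taking $W_1 := V_A^\dagger$ completes the proof.

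The only point that requires attention is checking that Lemma \ref{le:Fij} remains legitimate in the degenerate case of a one-dimensional auxiliary space, which is immediate from its statement (all the $F_{ij}$ collapse to scalars and the relations derived there determine them uniquely). A direct verification by explicit eigenvector computation would be considerably more painful, since the overlaps $\langle s|_{A_1}r\rangle_{A_2}$ involve half-integer powers of $\omega$ and do not collapse to Kronecker deltas, forcing a term-by-term matching against the entries of $T_d$ in \eqref{ZdTd}. The structural route above sidesteps this bookkeeping entirely.
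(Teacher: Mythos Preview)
Your argument is correct and is genuinely different from the paper's proof. The paper proceeds by brute force: it writes down explicit unitaries $W_1=M_1^\dagger FY^\dagger$ and $W_2=SM_2^\dagger FY^\dagger$ built from the Fourier matrix and diagonal phase matrices, exhibits the eigenvectors $\ket{r}_{T_d}$ of $T_d$ explicitly, and then checks entry by entry that $W_1^\dagger\ket{r}_{A_k}$ and $W_2^\dagger\ket{r}_{B_k}$ land on the correct eigenvectors of $Z_d$ and $T_d$ up to phases. You instead observe that the CGLMP realization is itself a maximal violator of the SATWAP inequality with full-rank marginals, so the operator relations \eqref{Ccond1}--\eqref{Ccond2} and their Alice-side analogues hold as genuine operator identities on $\C^d$, and then invoke Lemma~\ref{le:Fij} (with trivial auxiliary space) to manufacture the required unitaries. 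You are also right to flag and resolve the circularity concern: the only place the main theorem's proof appeals to Fact~\ref{fact:cglmp} is in the construction of the extra rotation $W_A$ in \eqref{dupa1}--\eqref{dupa2}, and you deliberately stop before that step, so the portion of the argument you recycle (the SOS relations and Lemma~\ref{le:Fij}) is logically independent of the fact being proved. The trade-off is that the paper's approach yields concrete formulas for $W_1,W_2$, which are then reused verbatim in the proof of Fact~\ref{fact:CtoZ} via $W_A=W_2^T W_1$; your existence argument still suffices there, but one loses the explicit expression.
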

%
%
\begin{proof}
We know that the following spectral decomposition holds, $Z_d = \sum^{d-1}_{q=0}\w^q \ket{q}\bra{q}$, and $T_d= \sum^{d-1}_{r=0} \w^r \ket{r}\!\bra{r}_{T_d} $ where
\begin{eqnarray} \label{rTd}
  \ket{r}_{T_d}= \frac{2}{d}\sum_{q=0}^{d-1}(-1)^{\delta_{q,0}}\frac{\omega^{-\frac{q}{2}}}{1-\omega^{\left(r-q-\frac{1}{2}\right)}}\ket{q} .
  \end{eqnarray}
For clarity, we verify the spectral decomposition of $T_d$,
\begin{eqnarray}\label{B20}
T_d\ket{r}_{T_d}&=&\left(\sum_{i=0}^{d-1}\omega^{i+\frac{1}{2}}\proj{i}-\frac{2}{d}\sum_{i,j=0}^{d-1}(-1)^{\delta_{i,0}+\delta_{j,0}}\omega^{\frac{i+j+1}{2}}|i\rangle\!\langle j|\right)\left(\frac{2}{d}\sum_{q=0}^{d-1}(-1)^{\delta_{q,0}} \frac{\omega^{\frac{-q}{2}}}{1-\omega^{r-q-\frac{1}{2}}}\ket{q}\right)\nonumber\\
&=&\frac{2}{d}\sum_{q=0}^{d-1}(-1)^{\delta_{q,0}}\omega^{\frac{q+1}{2}}\left(\frac{1}{1-\omega^{r-q-\frac{1}{2}}}-\frac{2}{d}\sum_{k=0}^{d-1}\frac{1}{1-\omega^{r-k-\frac{1}{2}}}\right)\ket{q}.
\end{eqnarray}
Using the relation
\be 
\sum_{l=0}^{d-1}\omega^{(r-k-\frac{1}{2})l}=\frac{2}{1-\omega^{r-k-\frac{1}{2}}},
\ee we evaluate the sum
\begin{eqnarray}
\sum_{k=0}^{d-1}\frac{1}{1-\omega^{r-k-\frac{1}{2}}}=\frac{1}{2}\sum_{l=0}^{d-1}\left(\sum_{k=0}^{d-1}\omega^{(r-k-\frac{1}{2})l}\right) .
\end{eqnarray}
The above sum is nonzero iff $l=0$, which gives
\begin{eqnarray} \label{sum10}
\sum_{k=0}^{d-1}\frac{1}{1-\omega^{r-k-\frac{1}{2}}}
=\frac{d}{2}.
\end{eqnarray}
Substituting the above relation \eqref{sum10} in Eq. $\eqref{B20}$ and the replacing the form of $\ket{r}_{T_d}$ from \eqref{rTd}, we get
\begin{eqnarray}
T_d\ket{r}_{T_d} &=& \frac{2}{d}\sum_{q=0}^{d-1}(-1)^{\delta_{q,0}}\omega^{\frac{q+1}{2}}\left(\frac{1}{1-\omega^{r-q-\frac{1}{2}}}-1\right)\ket{q} \nonumber  \\
&=&\omega^r\ket{r}_{T_d}.
\end{eqnarray}
Next we define the unitary operators $W_1,W_2$ as follows:
\begin{eqnarray} \label{Obsunist1}
W_1=M_1^\dagger FY^\dagger,\qquad W_2=SM_2^\dagger FY^\dagger,
\end{eqnarray}
where 
\begin{eqnarray}
Y=\sum_{j= 0}^{d-1}(-1)^{1-\delta_{j,0}}\omega^{\frac{d-j}{2}}\ket{j}\!\bra{j},\qquad S=\sum_{j= 0}^{d-1}\ket{j}\!\bra{d-1-j},
\end{eqnarray}
and $F$, $M_x$ $(x=1,2)$ are explicitly given by
\begin{eqnarray}
F=\frac{1}{\sqrt{d}}\sum_{i,j= 0}^{d-1}\ket{i}\!\bra{j}, \qquad M_x=\sum_{j= 0}^{d-1}\omega^{\frac{jx}{4}}\ket{j}\!\bra{j}. 
\end{eqnarray}
Here $\delta_{i,j}$ denotes the Kronecker delta, that is, $\delta_{ij}=1$ if $i=j$ and $\delta_{ij}=0$ for $i\neq j$.
To this fact it is sufficient to show that the proposed unitary transforms the eigenstates of one observable  to the eigenstates of another observable up to a complex number.
By expanding $W_1=M_1^\dagger FY^\dagger$ we obtain, 
 \begin{eqnarray}
W_1=\frac{1}{\sqrt{d}}\sum_{i,j=0}^{d-1}(-1)^{1-\delta_{j,0}}\omega^{-\frac{i}{4}+ij+\frac{j}{2}}\ket{i}\bra{j} ,
 \end{eqnarray}
 and further
 \begin{eqnarray}
 W_1^{\dagger}\ket{r}_{A_1}
 &=& \frac{1}{d}\sum_{j,q=0}^{d-1}(-1)^{1-\delta_{j,0}}\omega^{(r-j)q}\omega^{-\frac{j}{2}}\ket{j} .
 \end{eqnarray}
 The sum on the right-hand-side of the above equation would only exist if $r=j$ as $\sum_{k=0}^{d-1}\omega^{kn}=0$ whenever $n$ is a non-zero integer, and thus
 \begin{eqnarray}\label{A1toZd}
  W_1^{\dagger}\ket{r}_{A_1}
  = e^{\mathbbm{i}\pi(1-\delta_{r,0}-\frac{r}{d})} \ket{r} .
 \end{eqnarray}
Similarly, the expression
 \begin{eqnarray}
W_1^{\dagger}\ket{r}_{A_2}
 &=& \frac{1}{d}\sum_{j,q=0}^{d-1}(-1)^{1-\delta_{j,0}}\omega^{(r-j-\frac{1}{2})q}\omega^{-\frac{j}{2}}\ket{j} .
 \end{eqnarray}
 Carrying the above sum over $q$ and using the fact that $\omega^{\frac{dn}{2}}=-1$ for any non-negative odd integer $n$, with the aid of \eqref{rTd} we get
 \begin{eqnarray} \label{A2toTd}
  W_1^{\dagger}\ket{r}_{A_2}
  =-\ket{r}_{T_d} .
 \end{eqnarray}
Similarly, for Bob's observables $B'_k$,  expansion of $W_2=SM_2^\dagger FY^\dagger$ leads to
 \begin{eqnarray}
W_2=\frac{1}{\sqrt{d}}\sum_{i,j=0}^{d-1}(-1)^{1-\delta_{j,0}}\omega^{-\frac{i}{2}+ij+\frac{j}{2}}\ket{d-1-i}\bra{j} ,
 \end{eqnarray}
and further
 \begin{eqnarray}
 W_2^{\dagger}\ket{r}_{B_1}
 &=& \frac{1}{d}\sum_{j,q=0}^{d-1}(-1)^{1-\delta_{j,0}}\omega^{(j-r)q+(d-1)(\frac{1}{2}-j)-\frac{j}{2}}\ket{j} .
 \end{eqnarray}
The sum on the right-hand-side of the above equation would only exist if $r=j$ as $\sum_{k=0}^{d-1}\omega^{kn}=0$ whenever $n$ is a non-zero integer, and thus
 \begin{eqnarray}\label{B1toZd}
  W_2^{\dagger}\ket{r}_{B_1}
  = e^{\mathbbm{i}\pi(2-\frac{r-1}{d}-\delta_{r,0})} \ket{r} .
 \end{eqnarray}
 While, the following expression
 \begin{eqnarray}
 W_2^{\dagger}\ket{r}_{B_2}
 &=& \frac{1}{d}\sum_{j,q=0}^{d-1}(-1)^{1-\delta_{j,0}}\omega^{(j+\frac{1}{2}-r)q + (d-1)(\frac12-j)-\frac{j}{2}}\ket{j} ,
 \end{eqnarray}
which after carrying the sum over $q$ and using \eqref{rTd} simplifies to
 \begin{eqnarray}\label{B2toTd}
  W_2^{\dagger}\ket{r}_{B_2}
  =-\omega^{r-1}\ket{r}_{T_d}.
 \end{eqnarray}
Altogether, Eqs. \eqref{A1toZd}, \eqref{A2toTd}, \eqref{B1toZd}, and \eqref{B2toTd} complete the proof.
 \end{proof}

 \begin{fakt}\label{fact:CtoZ}
There exists a unitary operator $W_A:\mathbb{C}^d\to\mathbb{C}^d$ 
such that
\begin{equation} \label{fact3eq}
    W_A Z_dW_A^{\dagger}=a_1^{*}Z_d+2(a_1^{*})^3 T_d, \quad 
       W_A T_dW_A^{\dagger}=a_1Z_d-a_1^{*}T_d.
\end{equation} 
 \end{fakt}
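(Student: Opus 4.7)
The plan is to reduce Fact \ref{fact:CtoZ} to Lemma \ref{le:Fij}. Set $\widetilde Z := a_1^* Z_d + 2(a_1^*)^3 T_d$ and $\widetilde T := a_1 Z_d - a_1^* T_d$. The idea is to show that the pair $(\widetilde Z, \widetilde T)$ is unitarily equivalent to $(Z_d, T_d)$ as an ordered pair of unitaries in $U(d)$, after which the conjugating unitary is the required $W_A$.

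The first step is to verify that $\widetilde Z$ and $\widetilde T$ are unitary. Expanding $\widetilde Z \widetilde Z^\dagger$ produces a diagonal contribution equal to $(|a_1|^2 + 4|a_1|^6)\,I = I$ (using $2|a_1|^2 = 1$) together with cross-terms proportional to $a_1^2 Z_d T_d^\dagger + (a_1^*)^2 T_d Z_d^\dagger$. Invoking the commutation relation $Z_d T_d^\dagger = \omega^{-1} T_d Z_d^\dagger$, which is the $k=1$ case of Eq.~\eqref{Obs22} applied to $(B_1, B_2) = (Z_d, T_d)$, the cross-terms collapse to $[\omega^{-1} a_1^2 + (a_1^*)^2]\,T_d Z_d^\dagger$, and the bracketed scalar vanishes by a direct check from $a_1 = (1-\mathbbm{i})\omega^{1/4}/2$. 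The same style of calculation handles $\widetilde T$.

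Next, I would verify the algebraic conditions \eqref{Ccond1} and \eqref{Ccond2} for $(\widetilde Z, \widetilde T)$. The key base-case identities \[a_1^* \widetilde Z^{-1} + a_1 \widetilde T^{-1} = Z_d^{-1}, \qquad \omega^{-1} a_1 \widetilde Z^{-1} + a_1^* \widetilde T^{-1} \ \propto\ T_d^{-1}\] follow by substituting the explicit forms of $\widetilde Z^{-1} = \widetilde Z^\dagger$ and $\widetilde T^{-1} = \widetilde T^\dagger$ and using $2|a_1|^2 = 1$; together they yield Eq.~\eqref{Obs22} at $k=1$ for $(\widetilde Z, \widetilde T)$. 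Extending to \eqref{Ccond1} for all $k$ then proceeds by induction, mirroring the manipulations inside the proof of Lemma \ref{le:traceless}. Once \eqref{Ccond1} and \eqref{Ccond2} are established, Lemma \ref{le:traceless} gives $\mathrm{Tr}(\widetilde Z^n) = \mathrm{Tr}(\widetilde T^n) = 0$ for every proper divisor $n$ of $d$, and Lemma \ref{le:pol} forces every eigenvalue of $\widetilde Z$ and of $\widetilde T$ to have the same multiplicity. Since the matrices are $d$-dimensional with spectra in the $d$-th roots of unity, every eigenvalue occurs with multiplicity exactly one.

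By the spectral theorem there exists $W_0 \in U(d)$ with $W_0 Z_d W_0^\dagger = \widetilde Z$. Setting $T' := W_0^\dagger \widetilde T W_0$, the pair $(Z_d, T')$ inherits all the algebraic relations of $(\widetilde Z, \widetilde T)$ under conjugation, so the hypotheses of Lemma \ref{le:Fij} hold with $\mathcal{H}_{B'} = \mathbb{C}$. That lemma delivers a unitary $V \in U(d)$ with $V Z_d V^\dagger = Z_d$ and $V T' V^\dagger = T_d$, and $W_A := W_0 V^\dagger$ realizes both identities in \eqref{fact3eq}. The main obstacle is the second step: while the $k=1$ base identities reduce to short computations, checking \eqref{Ccond1} for all $k$ requires careful bookkeeping of the coefficients $a_k$ together with the commutation relation for $(Z_d, T_d)$, and no obvious closed-form shortcut avoids it.
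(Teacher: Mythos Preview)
Your route is genuinely different from the paper's. The paper does not invoke Lemmas \ref{le:traceless}--\ref{le:Fij} at all for this fact: it uses Fact~\ref{fact:cglmp} to write the CGLMP observables as $A_i'=W_1(\cdot)W_1^\dagger$ and $B_i'=W_2(\cdot)W_2^\dagger$ with $(\cdot)\in\{Z_d,T_d\}$, and then plugs these into the explicit linear relations $(B_i')^*=\alpha (A_1')^\dagger+\beta(A_2')^\dagger$ that follow from the fact (quoted from \cite{SATWAP}) that the CGLMP strategy attains the maximum. Two lines of algebra then produce the explicit unitary $W_A=W_2^{T}W_1$. So the paper's proof is a short, constructive calculation; yours is a structural argument via the classification lemmas.

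Your argument has a real gap at the step you flag as ``the main obstacle''. Verifying \eqref{Ccond1} for $(\widetilde Z,\widetilde T)$ at every $k$ is precisely what is needed to feed Lemma~\ref{le:Fij}, and the justification you give --- ``by induction, mirroring the manipulations inside the proof of Lemma~\ref{le:traceless}'' --- misfires: Lemma~\ref{le:traceless} \emph{assumes} \eqref{Ccond1} and derives trace identities from it; it never proves \eqref{Ccond1}. Nor does \eqref{Obs22} at $k=1$ suffice: the proof of Lemma~\ref{le:Fij} derives the key recursion \eqref{FijEq1} from the full family $C_2^{(k+1)}=C_2^{(1)}C_2^{(k)}$, and that identity for general $k$ is not a formal consequence of the $k=1$ case plus unitarity. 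Concretely, your desired equality $a_k^{*}\widetilde Z^{-k}+a_k\widetilde T^{-k}=Z_d^{-k}$ does not drop out of $a_1^{*}\widetilde Z^{-1}+a_1\widetilde T^{-1}=Z_d^{-1}$ by squaring, since the square produces an anticommutator term $\tfrac12\{\widetilde Z,\widetilde T\}$ that must be absorbed using a further relation among $\widetilde Z^2,\widetilde T^2,\{\widetilde Z,\widetilde T\}$ which you have not established.

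There is a non-circular fix, but it essentially reproduces the paper's argument: once you know from \cite{SATWAP} that the CGLMP strategy is optimal, the SOS conditions hold as operator identities for Alice's CGLMP observables, hence (via Fact~\ref{fact:cglmp}) for a $W_1$-conjugate of $(Z_d,T_d)$; inverting the resulting linear relations is exactly how the paper manufactures $W_A$. In other words, the input your inductive step would need is the same input the paper uses to write down $W_A$ in one stroke, so the structural detour through Lemmas~\ref{le:traceless}--\ref{le:Fij} buys nothing here.
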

 \begin{proof}
Due to the fact that the CGLMP measurements together with the maximally entangled state $\ket{\phi^+_d}$ yield the maximum violation of SATWAP Bell inequality \cite{SATWAP}, the following relations hold true,
 \begin{eqnarray}
 (B'_1)^*=a_{1}^*(A'_1)^{\dagger}+a_{1}(A'_2)^{\dagger}
 ,\qquad (B'_2)^*=a_{1}\omega^*(A'_1)^{\dagger}+a_{1}^*(A'_2)^{\dagger}.
 \end{eqnarray}
Substituting the CGLMP measurements from Fact \ref{fact:cglmp}, we get
  \begin{eqnarray}
& (W_2Z_dW_2^{\dagger})^*=a_{1}^*(W_1Z_dW_1^{\dagger})^{\dagger}+a_{1}(W_1T_dW_1^{\dagger})^{\dagger},\nonumber \\
& (W_2T_dW_2^{\dagger})^*=a_{1}\omega^*(W_1Z_dW_1^{\dagger})^{\dagger}+a_{1}^*(W_1T_dW_1^{\dagger})^{\dagger} .
 \end{eqnarray}
Taking the transpose-conjugate on both sides, and then multiplying $W_2^T$ and $W_2^*$ from left end and right end, respectively, of the above equations, we further obtain
\ben
 Z_d=a_{1}W_AZ_dW_A^{\dagger}+a_{1}^*W_AT_dW_A^{\dagger},\quad T_d=a_{1}^*\omega W_AZ_dW_A^{\dagger}+a_{1}W_AT_dW_A^{\dagger} 
\een 
where $W_A=W_2^{T} W_1$. Consequently, the above equations lead us to the desired relation \eqref{fact3eq}.
 \end{proof}

\end{document}